\newcommand{\fig}[1]{\figurename~\ref{#1}}
\newif\ifComments
\definecolor{purple}{hsb}{.77,1,1}
\newcommand{\marrow}{\marginpar[\hfill$\longrightarrow$]{$\longleftarrow$}}
\newcommand{\niceremark}[4]{\textcolor{#4}{\textsc{#1 #2:} \marrow\textsf{#3}}}
	\newcommand{\mati}[2][says]{\niceremark{Matias}{#1}{#2}{red}}
	\newcommand{\tom}[2][says]{\niceremark{Thomas}{#1}{#2}{magenta}}
	\newcommand{\birgit}[2][says]{\niceremark{Birgit}{#1}{#2}{purple}}
	\newcommand{\alex}[2][says]{\niceremark{Alex}{#1}{#2}{cyan}}
	\newcommand{\andre}[2][says]{\niceremark{Andr\'e}{#1}{#2}{blue}}
	\newcommand{\marcel}[2][says]{\niceremark{Marcel}{#1}{#2}{orange}}
	\newcommand{\highlight}[1]{\textcolor{purple}{#1}}
	\newcommand{\mati}[2][says]{}
	\newcommand{\tom}[2][says]{}
	\newcommand{\birgit}[2][says]{}
	\newcommand{\alex}[2][says]{}
	\newcommand{\andre}[2][says]{}
	\newcommand{\marcel}[2][says]{}
	\newcommand{\highlight}[1]{}
\newtheorem{theorem}{Theorem}
\newtheorem{corollary}[theorem]{Corollary}
\newtheorem{lemma}{Lemma}
\newtheorem{construction}{{\bf Construction}} % by Tom
\newcommand{\floor}[1]{\ensuremath{\left \lfloor #1 \right \rfloor}}
\newcommand{\BE}{\ensuremath{{\sf BE}}}
\newcommand{\MST}{\ensuremath{{\sf MST}}}
\title{Packing Plane Spanning Graphs with Short Edges in Complete Geometric Graphs%
\footnote{%
A preliminary version of this paper was presented in the 27th International Symposium on Algorithms and Computation (ISAAC 2016), Sydney, Australia~\cite{DBLP:conf/isaac/AichholzerHKPRR16}.
}
}
\keywords{Geometric Graphs, Graph Packing, Plane Graphs, Minimum Spanning Tree, Bottleneck Edge}
\author{Oswin Aichholzer\thanks{Institute of Software Technology, Graz University of Technology, Austria. \texttt{\{oaich|thackl|apilz|bvogt\}@ist.tugraz.at}}
\and Thomas Hackl\footnotemark[2]
\and Matias Korman\thanks{Department of Computer Science, Tufts University, Medford, MA, USA. \texttt{matias.korman@tufts.edu}}
\and Alexander Pilz\footnotemark[2]
\and G\"unter Rote\thanks{Institut f\"ur Informatik, Freie Universit\"at Berlin, Germany. \texttt{rote@inf.fu-berlin.de}}
\and Andr\'e van Renssen\thanks{University of Sydney, Sydney, Australia. \texttt{andre.vanrenssen@sydney.edu.au}}
\and Marcel Roeloffzen\thanks{Department of Mathematics and Computer Science, TU Eindhoven, The Netherlands. {\tt m.j.m.roeloffzen@tue.nl}}
\and Birgit Vogtenhuber\footnotemark[2]
}
\begin{document}
\maketitle

\begin{abstract}
Given a set of points in the plane, we want to establish a connected spanning graph between these points, called connection network, that consists of several disjoint layers.
Motivated by sensor networks, our goal is that each layer is connected, spanning, and plane.
No edge in this connection network is too long in comparison to the length needed to obtain a spanning tree. 

We consider two different approaches.
First we show an almost optimal centralized approach to extract two layers.
Then we consider a distributed model in which each point can compute its adjacencies using only information about vertices at most a predefined distance away.
We show a constant factor approximation with respect to the length of the longest edge in the graphs.
In both cases the obtained layers are plane.
\end{abstract}

%------------------------------------------------------------------------
\section{Introduction}
%------------------------------------------------------------------------

Given a set $S$ of $n$ points in the plane and an integer $k$, we are interested in finding $k$ edge-disjoint non-crossing spanning graphs $H_1,H_2,\ldots, H_k$ on $S$ such that the length $\BE(H_1\cup H_2 \cup \cdots \cup H_k)$ of the
\emph{bottleneck edge} (the longest edge which is used) is as short as possible. 
Each subgraph $H_i$ is referred to as a \emph{layer} of the complete graph $G$ on the $n$ points.  We require each layer
to be non-crossing, but edges from different layers are allowed
to cross each other.  For $k=1$, the minimum spanning tree $\MST(S)$
solves the problem: its longest edge $\BE(\MST(S))$ is a lower
bound on the bottleneck edge of any spanning subgraph, and it is
non-crossing.  For larger~$k$, we take $\BE(\MST(S))$ as
the yardstick and measure the solution quality in terms of $\BE(\MST(S))$
and~$k$.

Although we find the problem to be of its own (theoretical) interest, this particular variation comes motivated from the field of sensor networks. In sensor networks, the energy consumed in transmission drastically grows as the distance between the two points increases~\cite{fwz-iar-05,k-mianbcr-12}. Since we want to avoid high energy consumption, it is desirable to apply the bottleneck criterion in order to minimize the maximum length of the whole network.

Once the network is built, we want to send messages through it without having to store the network explicitly at each node.  
%Imagine one wants to construct a network so that afterwards communication between sensors is possible. One of the most important requirements for such a network is that we can send messages through it easily.
%Ideally, we want a method that -- given the source, destination, information on the current position (and possibly $O(1)$ additional information) -- computes the next node to visit in order to reach our destination. 
One of the most commonly used methods for doing so is called {\em face routing}~\cite{ksu-crgn-99}, which uses only local information and guarantees delivery as long as the underlying network is plane. In fact, most local routing algorithms can only route on plane graphs. Extending these algorithms for non-plane graphs is a long-standing open problem in the field. In this paper, we provide a different way to avoid this obstacle. Rather than one plane graph, we construct several disjoint plane spanning graphs. If we split all the messages among the different layers we can potentially spread the load among a larger number of edges.

%Another important feature to consider when creating networks is energy consumption. The required energy for sending a message from one point to the next increases with the distance between the two points (usually this relation is cubic or worse, hence when the distance doubles, the required energy is multiplied by eight)~\cite{fwz-iar-05}. Since we want to avoid high energy consumption at one particular node, it is desirable to apply the bottleneck criterion and to minimize the maximum length of an edge~\cite{k-mianbcr-12}.

\subparagraph{Previous Work.}
This problem falls into the family of \emph{graph packing} problems,
where we are given a graph $G=(V,E)$ and a family $\mathcal{F}$ of
subgraphs of $G$. 
The aim is to pack as many
pairwise edge-disjoint subgraphs $H_1=(V,E_1), H_2=(V,E_2), \ldots $ as
possible into $G$. 

A related problem is the {\em decomposition} of $G$. In this case, we also look for disjoint subgraphs but require that $\bigcup_i E_i=E$. For example, there are known characterizations of when we can decompose the complete graph of $n$ points into paths~\cite{tarsi-dcmsp-83} (for $n$ even) and stars~\cite{pt-mdsm-05} (for $n$ odd). Dor and Tarsi~\cite{dt-gdnpc-97} showed that to determine whether we can decompose a graph $G$ into subgraphs isomorphic to a given graph $H$ is NP-complete. %Given the hardness results, most research has focused in worst-case results. 
Concerning graph packing, Aichholzer {\em et al.}~\cite{ahkk-ppst-14} showed that we can pack $\Omega(\sqrt{n})$ edge-disjoint plane spanning trees in the complete graph on any set $S$ of $n$ points.
This bound has been improved to~$\lfloor n/3 \rfloor$ by Biniaz and Garc\'ia~\cite{biniaz_garcia}. 
Note that a trivial $\lfloor n/2 \rfloor$ upper bound follows from the number of edges in the complete graph.
Thus, the latter result is close to optimal.

In our case, the graph $G$ is the complete graph on a given point set $S$, and
$\mathcal{F}$ consists of all plane spanning graphs of $G$.
In addition to proving results for a large (fixed) number of layers, we are interested in minimizing a geometric constraint (Euclidean length of the longest edge among the selected graphs of~$\mathcal{F}$). To the best of our knowledge, this is the first packing problem of such type.  

\subparagraph{Results.}
Recall that both the point set $S$ and the integer $k$ are given and that we aim to find $k$ edge-disjoint connected plane spanning graphs $H_1,H_2,\ldots, H_k$ on $S$ such that the length $\BE(H_1\cup H_2 \cup \cdots \cup H_k)$ of the \emph{bottleneck edge} (the longest edge that is used) is minimized. 

We give two different approaches to solve the problem. In
Section~\ref{centralized} we give a construction for two spanning trees, i.e., $k=2$.
This construction is centralized in a classic model that assumes that the positions of all points are known and computed in a single place.
Our construction creates two trees and guarantees that all edges (except possibly one) have length at most $2\BE(\MST(S))$. The remaining edge has length at most $3\BE(\MST(S))$. We complement this construction with a matching worst-case lower bound that shows that for two spanning trees this is the shortest length the longest edge in the graphs can have.

In Section~\ref{sec_dist} we use a different approach to construct $k$ edge-disjoint connected plane spanning graphs (not necessarily trees). The construction works for any $k\leq n/12$ in an almost local fashion, i.e., using only information about vertices at most a certain maximum distance away. The only global information that is needed is $\beta$: $\BE(\MST(S))$ or some upper bound on it. Each point of $S$ can compute its adjacencies by only looking at nearby points, namely, those at distance $O(k\beta)$. 

A simple adversary argument shows that it is impossible to construct
spanning networks locally without knowing $\BE(\MST(S))$ (or an upper bound). The
lower bound of Section~\ref{centralized} shows that a
neighborhood of radius $\Omega(k\BE(\MST(S)))$ may be needed for the network, so we conclude that our
construction is asymptotically optimal in terms of the neighborhood.

For simplicity, throughout the paper we make the usual general position assumption that no three points are collinear. Without this assumption, it might be impossible to obtain more than a single plane layer (for example, when all points lie on a line).
Note however, that if collinear and partially overlapping edges are considered as non-crossing, our algorithms do not require the point set to be in general position.

%------------------------------------------------------------------------
\section{Centralized Construction} \label{centralized}
%------------------------------------------------------------------------

In this section we present a centralized algorithm to construct two layers.  We start with some properties on the minimum spanning tree of a set of points.

\begin{lemma}
  \label{longest-edge}
  If $|uw|>\max\{|uv|,|vw|\}$ for three points $u,v,w\in S$, the edge $uw$
  does not belong to $\MST(S)$.
\end{lemma}
\begin{proof}
  This is a special case of the more general well-known statement that
  the longest edges of any cycle in a graph, if it is unique, does not
  belong to its minimum spanning tree: The greedy algorithm would first pick all other
  edges of the cycle unless their endpoints are already
  connected. Thus, when the algorithm looks at the longest edge, its
  endpoints are already connected, and the edge is not included in the
  minimum spanning tree.
\end{proof}

\begin{lemma}\label{lem_not_in_triangle}
Let $S$ be a finite set of points in the plane and let $uv$ and $vw$ be two edges of $\MST(S)$.
Then the triangle $uvw$ does not contain any other point of $S$.
\end{lemma}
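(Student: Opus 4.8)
The plan is to argue by contradiction. Suppose some point $p\in S\setminus\{u,v,w\}$ lies in the triangle $uvw$. Since no three points of $S$ are collinear, $p$ cannot lie on a side of the triangle, so $p$ is in its open interior. The goal is to use $p$ to perform a minimum-spanning-tree edge exchange that yields a spanning tree strictly shorter than $\MST(S)$, contradicting its minimality.

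First I would record one geometric fact. Since $p$ lies in the interior of the convex hull of $\{u,v,w\}$, the three rays from $p$ through $u$, $v$, and $w$ are not contained in any closed half-plane, so the angles $\angle upv$, $\angle vpw$, $\angle wpu$ are each smaller than $180^\circ$ and sum to $360^\circ$. Hence $\angle upv+\angle vpw>180^\circ$, so at least one of them exceeds $90^\circ$; assume $\angle upv>90^\circ$, the other case being symmetric under exchanging $u$ with $w$ (and the $\MST$ edge $uv$ with $vw$). In the triangle $upv$ the angle at $p$ is then strictly the largest, so the opposite side $uv$ is strictly the longest: $|uv|>|up|$ and $|uv|>|vp|$.

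Now delete the edge $uv$ from $\MST(S)$. This splits the tree into two subtrees, one containing $u$ and the other containing $v$, and $p$ belongs to exactly one of them. If $p$ is in the subtree containing $v$, then adding the edge $up$ reconnects the two subtrees into a spanning tree whose total length is smaller by $|uv|-|up|>0$, contradicting the minimality of $\MST(S)$. If $p$ is in the subtree containing $u$, the same works with the edge $vp$ and the inequality $|uv|-|vp|>0$. Either way we obtain a contradiction, so no point of $S$ other than $u,v,w$ lies in the open triangle, and by the collinearity remark none lies on its boundary either. The only step I expect to require a moment of care is the geometric claim that the three angles at $p$ sum to $360^\circ$ with each below $180^\circ$ --- equivalently, that $p$ is interior to $\operatorname{conv}\{u,v,w\}$; after that, the exchange is entirely routine.
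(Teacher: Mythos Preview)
Your proof is correct and follows essentially the same edge-exchange strategy as the paper: show that $p$ is strictly closer to both endpoints of one of the two $\MST$ edges incident to $v$, then swap that edge for the shorter one to contradict minimality. The only difference is cosmetic, in how the key distance inequality is obtained---the paper drops a perpendicular from $v$ to the line $uw$ and uses that $uv$ (respectively $vw$) is the hypotenuse of the resulting right sub-triangle, whereas you use the angle sum at $p$ to locate an obtuse angle $\angle upv$ or $\angle vpw$; both arguments are equally elementary.
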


% \begin{figure}[h]\label{fig:TriangleEmpty}
%   \centering
% \includegraphics{}
%    \caption{Proof of Lemma~\ref{lem_not_in_triangle}%: any point $p\in S$ in the triangle defined by $u$, $v$ and $w$ must have an edge in $\MST(S)$, giving a contradiction.
% }
% \end{figure}

\begin{proof}
 Suppose for the sake of contradiction that the triangle
 $uvw$ contains a point~$p \in S$. Then the sum of the
 angles $vpu$
 and $vpw$ is at least $\pi$; see \fig{inside-wedge}~(a).
 %{fig:TriangleEmpty}.
Hence, one of these angles, say, $vpu$ is least $\pi/2$. But then
$vu$ is the longest edge in the triangle $vpu$, and by
Lemma~\ref{longest-edge},
$vu$ cannot belong to $\MST(S)$, a contradiction.
\end{proof}

\begin{figure}[h]\label{inside-wedge}
  \centering
\includegraphics{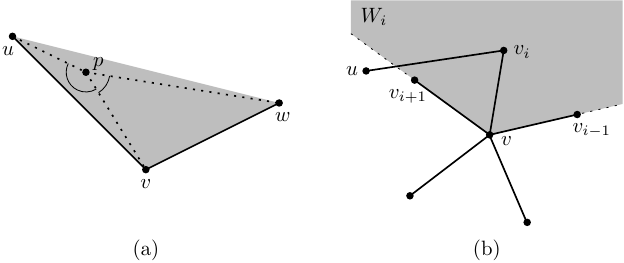}
\caption{(a)
  Proof of Lemma~\ref{lem_not_in_triangle}.
  (b)~Proof of Lemma~\ref{lem:stay-in-wedge}: The neighbors of
     $v_i$ cannot lie outside the wedge $W_i$ defined by its two siblings in $\MST(S)$.}
\end{figure}

\begin{lemma}\label{lem:stay-in-wedge}
Let $S$ be a finite set of points in the plane.
Let $v\in S$ be a point
with $k\geq3$ neighbors
 $v_0,\ldots,v_{k-1}$ in $\MST(S)$
 % , withbeing the neighbors of $v$ in $\MST(S)$
 in counterclockwise order. % around $v$.
Then for every triple $(v_{i-1},v_{i},v_{i+1})$ \textup(indices modulo $k$\textup),
the neighbors of $v_i$ in $\MST(S)$ are inside the wedge $W_i$ that is
bounded by the rays $vv_{i-1}$ and $vv_{i+1}$ and contains the edge $vv_i$.
\end{lemma}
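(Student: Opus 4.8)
The plan is a proof by contradiction. Suppose some neighbor $u$ of $v_i$ in $\MST(S)$ lies outside the closed wedge $W_i$; I will contradict the minimality of $\MST(S)$, with Lemma~\ref{lem_not_in_triangle} as the main tool. First, by the general position assumption $v$ and $v_i$ are collinear with neither $v_{i-1}$ nor $v_{i+1}$, so $v_i$ lies in the \emph{interior} of $W_i$ while $u$ lies in its exterior; hence the segment $v_iu$ meets the boundary of $W_i$, which is the union of the rays from $v$ through $v_{i-1}$ and through $v_{i+1}$. This crossing cannot occur at the apex $v$ (that would make $v,v_i,u$ collinear), and since neither endpoint of $v_iu$ lies on $\partial W_i$, there is a crossing point $x$ in the relative interior of the segment $v_iu$ that lies on one of the two rays --- say on the ray $vv_{i-1}$ with $x\neq v$, the case of the ray $vv_{i+1}$ being symmetric. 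Note also that $u\neq v$ and $u\neq v_{i-1}$, because both $v$ and $v_{i-1}$ lie in $W_i$.

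Next I would split on where $x$ falls along the ray $vv_{i-1}$; general position already rules out $x=v_{i-1}$. If $x$ lies strictly between $v$ and $v_{i-1}$, then the two $\MST(S)$-edges $vv_{i-1}$ and $v_iu$ cross in their relative interiors, which is impossible because $\MST(S)$ is non-crossing. The remaining case --- the crux of the proof --- is that $v_{i-1}$ lies strictly between $v$ and $x$. Here I would apply Lemma~\ref{lem_not_in_triangle} to the triangle $vv_iu$: its sides $v_iv$ and $v_iu$ are both edges of $\MST(S)$ incident to $v_i$, so this triangle contains no point of $S$ other than its three vertices. But $x$ lies in the relative interior of the side $v_iu$ while $v_{i-1}$ lies strictly between the opposite vertex $v$ and the point $x$, so $v_{i-1}$ lies in the interior of triangle $vv_iu$; since $v_{i-1}\in S\setminus\{v,v_i,u\}$, we have the required contradiction, which completes the proof.

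I expect the one genuine insight to be realizing that, in the last case, the triangle to apply Lemma~\ref{lem_not_in_triangle} to is $vv_iu$ (in which $v_{i-1}$ is trapped precisely when $v_iu$ exits $W_i$ through the far part of the ray), rather than the more obvious triangle $vv_{i-1}v_i$. Everything else is routine bookkeeping: verifying that the argument is unaffected when $W_i$ is reflex (opening angle larger than $\pi$), and that the degenerate incidences $x=v$ and $x=v_{i-1}$ are excluded by general position. None of this requires any computation.
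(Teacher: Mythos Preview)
Your proposal is correct and follows essentially the same approach as the paper: assume $u\notin W_i$, note that $v_iu$ crosses one of the bounding rays, use planarity of $\MST(S)$ to rule out a crossing on the actual edge $vv_{i\pm1}$, and then apply Lemma~\ref{lem_not_in_triangle} to the triangle $vv_iu$ to trap $v_{i\pm1}$ inside. The only difference is cosmetic: the paper dispatches your ``$x$ between $v$ and $v_{i-1}$'' case implicitly (``as $\MST(S)$ is plane, the edge $v_iu$ does not intersect the edge $vv_{i+1}$'') and proceeds directly to the triangle argument, whereas you make the case split explicit.
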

\begin{proof}
  Assume for the sake of contradiction that
 $v_i$ has a neighbor
 $u$ in   $\MST(S)$ that does not lie in $W_i$;
 see \fig{inside-wedge}~(b).
  Then the edge $v_iu$ intersects one of the boundary rays of $W_i$,
  say,
$vv_{i+1}$.
As $\MST(S)$ is plane, the edge $vv_{i+1}$ does not intersect the edge  $v_iu$.
Hence, the point $v_{i+1}$ lies in the triangle $vv_iu$.
As $vv_i$ and $v_i
u$ are in $\MST(S)$, this contradicts Lemma~\ref{lem_not_in_triangle}.
\end{proof}

We denote by $\MST^2(S)$ the square of $\MST(S)$, the graph connecting
all pairs of points of $S$ that are at distance at most 2 in $\MST(S)$.
We call the edges of $\MST(S)$ \emph{short} edges and all
remaining edges of $\MST^2(S)$ \emph{long} edges.
For every long edge $uw$, 
the points $u$ and $w$
have a unique common neighbor $v$ in $\MST(S)$, which we call
the \emph{witness} of $uw$.
We define the \emph{wedge} of $uw$ to be the area that is
bounded by the rays $vu$ and $vw$ and contains the segment~$uw$.

We now characterize edge crossings in $\MST^2(S)$;
see \fig{crossing-MST2}.

\begin{figure}[h]\label{crossing-MST2}
  \centering
\includegraphics{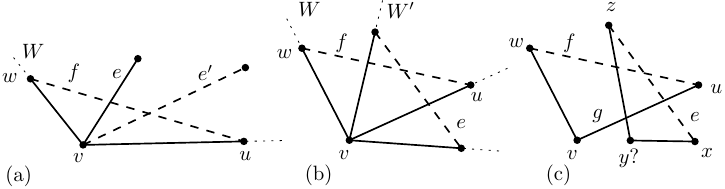}
\caption{The different cases of Lemma~\ref{lem:MST2edges}.
  Short
     edges are solid and long
     edges are dashed. (a)
     Condition 1, showing two options $e$ and $e'$, (b) Condition 2,
     and (c) the contradiction used in the proof.}
     \end{figure}

\begin{lemma}\label{lem:MST2edges}
Let $S$ be a finite set of points in the plane. 
Two edges $e$ and $f$ of $\MST^2(S)$ cross if and only if one of the following two conditions is fulfilled:
\begin{enumerate}
\item At least one of $e$ and $f$ is a long edge with witness $v$ and wedge $W$, and the other edge has $v$ as an endpoint and lies inside $W$.
\item\label{lem:MST2edges:cond_long} Both $e$ and $f$ are long edges with the same witness $v$, their wedges intersect, but none is contained in the other.
\end{enumerate} 
\end{lemma}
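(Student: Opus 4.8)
The plan is to prove both implications, relying throughout on one consequence of Lemma~\ref{lem_not_in_triangle}. First I would fix a long edge $g=ab$ with witness $v$, and record that the triangle $vab$ is empty of points of $S$ (Lemma~\ref{lem_not_in_triangle}), that it is exactly the part of the wedge of $g$ (which is convex, with apex angle less than $\pi$) lying on the same side of the line through $a$ and $b$ as $v$, and then prove the key observation: \emph{if any edge $h$ of $\MST^2(S)$ crosses $g$, then $h$ is incident to $v$, or $h$ crosses one of the short edges $va$, $vb$}. To see this, follow $h$ from the crossing point into the triangle $vab$; since no point of $S$ lies in its interior, this sub-segment of $h$ must exit the triangle through the boundary, and general position rules out exiting through the relative interior of a side or through $a$ or $b$ (the latter would make $h$ and $g$ share an endpoint). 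Two corollaries: (a)~a \emph{short} edge crossing $g$ must be incident to $v$, since it cannot cross the short edges $va$, $vb$ as $\MST(S)$ is plane; and (b)~if $h=vz$ is incident to $v$ and crosses $g$, then $z$ lies in the wedge of $g$, because the crossing point already lies in the open wedge and hence so does the whole ray from $v$ through it.

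For the \emph{only if} direction: $e$ and $f$ cannot both be short edges. If exactly one of $\{e,f\}$, say $e$, is long, with witness $v$ and wedge $W$, then $f$ is a short edge crossing $e$, so by (a) and (b) it is incident to $v$ with its other endpoint in $W$ — Condition~1. If both are long with the same witness $v$: the crossing point lies in both open wedges, so the wedges intersect; and if one wedge contained the other, then both endpoints of the inner edge would lie in the closed outer wedge but, being points of $S$ other than the vertices of the outer triangle, outside that empty triangle, i.e.\ strictly beyond the line of the outer edge, making a crossing impossible — contradiction. Hence the wedges intersect with neither containing the other — Condition~2. If both are long with distinct witnesses $v_e\ne v_f$, apply the key observation to $e$ and $f$: either $f$ is incident to $v_e$, and then by (b) its other endpoint lies in the wedge of $e$ — Condition~1 — or $f$ crosses a short edge $v_e c$ with $c$ an endpoint of $e$; applying (a) to that short edge and the long edge $f$ shows the short edge is incident to $v_f$, forcing $v_f=c$, so $e$ is incident to $v_f$ and by (b) its other endpoint lies in the wedge of $f$ — Condition~1 again.

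For the \emph{if} direction, Condition~1 is quick: with $e=ab$ long (witness $v$) and $f=vz$ where $z$ lies in the wedge of $e$, the point $z$ is a point of $S$ in the wedge but not in the empty triangle $vab$, hence strictly beyond the line through $a$ and $b$; since $v$ is on the near side and the wedge meets that line exactly in the segment $ab$, the segment $vz=f$ crosses $ab=e$. Condition~2 is the hard case. Given $e=a_eb_e$ and $f=a_fb_f$ long with common witness $v$ and wedges that intersect with neither containing the other, I would first argue that — since both wedges have apex angle less than $\pi$ — the four neighbours occur around $v$, after relabeling, in the cyclic order $a_e,a_f,b_e,b_f$, so that the ray $vb_e$ lies strictly inside the wedge of $f$. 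Then $vb_e$ meets the chord $f$ at a point $q$; since $v$ is on the near side of the line of $f$ while $b_e$, a point of $S$ outside the empty triangle $va_fb_f$ but in the closed wedge of $f$, is strictly beyond it, $q$ lies strictly between $v$ and $b_e$, i.e.\ in the relative interior of the triangle side $vb_e$. There $f$ crosses the line of $vb_e$ transversally, so one of its two parts at $q$ enters the interior of the empty triangle $va_eb_e$; the corresponding endpoint of $f$ (one of $a_f$, $b_f$) is a point of $S$ outside that triangle, so this part of $f$ must exit the triangle again — not through $vb_e$ (met only once) and not through $va_e$ (whose supporting ray lies outside the closed wedge of $f$, which contains $f$) — hence through $e=a_eb_e$; so $e$ and $f$ cross.

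The step I expect to be the main obstacle is precisely the last one: deducing an actual crossing from ``overlapping wedges'' requires using the emptiness of \emph{both} triangles $va_eb_e$ and $va_fb_f$ together with the cyclic order of the four neighbours, and the bookkeeping of which ray meets which chord, and on which side of which line each point lies, must be done carefully. The \emph{only if} direction, by contrast, is largely mechanical once the key observation and its corollaries are available; the only real subtlety is handling the degenerate configurations — shared endpoints and collinear triples — that the general-position assumption is there to exclude.
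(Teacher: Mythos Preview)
Your proposal is correct, and the ``only if'' direction follows essentially the same path as the paper: both arguments hinge on Lemma~\ref{lem_not_in_triangle} (emptiness of the triangle $vab$) to force any edge crossing a long edge $g=ab$ either to be incident to the witness $v$ or to cross one of the short sides $va,vb$, and then iterate this observation. Your packaging into a reusable ``key observation'' with corollaries (a) and (b) is a bit cleaner than the paper's more ad hoc case analysis, but the ideas coincide.

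Where your write-up genuinely goes beyond the paper is the ``if'' direction of Condition~\ref{lem:MST2edges:cond_long}. The paper's own proof establishes that Condition~1 implies a crossing, and that any crossing not covered by Condition~1 forces Condition~\ref{lem:MST2edges:cond_long}; but it never argues the converse, that two long edges with the same witness and properly overlapping wedges must actually cross. Your argument --- reading off the interleaved cyclic order $a_e,a_f,b_e,b_f$, locating the point $q$ where the short side $vb_e$ meets $f$, and then tracing the $a_f$-half of $f$ into and back out of the empty triangle $va_eb_e$, with the exit forced through $e$ because the side $va_e$ lies outside the wedge of $f$ --- fills this gap cleanly. This direction is not used anywhere in the paper's applications of the lemma (only the ``only if'' direction is ever invoked), which may explain the omission, but since the lemma is stated as an equivalence your treatment is the more complete one.
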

\begin{proof}
  Clearly, if both $e$ and $f$ are short,
they cannot cross.
%  edges, i.e., edges of $\MST(S)$, then they do not cross.
Without loss of generality assume that
$f=uw$ is a long edge with witness $v$ and wedge $W$.
If $e$ is incident to $v$, then it must lie in $W$ in order
to cross $f$, and we satisfy Condition 1.

In the remaining case,
$e=xz$ with $x,z\in S\setminus\{u,v,w\}$.
By Lemma~\ref{lem_not_in_triangle}, $x$ and $z$ cannot lie in the
triangle
$uvw$; hence, $e$ must cross one of the $\MST$ edges $uv$ or $vw$ in
addition to the edge $f=uw$. It follows that $e$ cannot be short,
and it has some witness $y$ and some wedge $W'$.
We distinguish three possibilities for~$y$:

(i) If $y=v$,
we satisfy Condition 2:
$W'$ is not contained in $W$ because $e$ crosses
$uv$ or $vw$, and by swapping the roles of $e$ with $f$, we conclude
that $W$ is not contained in $W'$. The wedges $W$ and $W'$ must
overlap because otherwise $e$ and $f$ could not intersect.

(ii)
If $y=u$ or $y=w$, we can swap the roles of $e$ and $f$, thus satisfying Condition~1.

(iii)
We are left with the case that all six points $u,v,w,x,y,z$ are
distinct.
Let $g=uv$ or $g=vw$ be the edge that is intersected by $e$.
By  Lemma~\ref{lem_not_in_triangle}, the triangle
$xyz$ is empty; thus, $g$ must intersect a second edge $xy$ or $yz$ of
this triangle, in addition to $e=xz$. This is a contradiction, since
the edges $g$, $xy$, and $yz$ are edges of the $\MST$.

It is easy to see that the two conditions are
sufficient for a crossing:
In both situations of Condition 1 and Condition~2
(\fig{crossing-MST2}), if there were no crossing between $e$ and
 $f$, an endpoint of one edge would be contained in the triangle
 spanned by the other edge and its witness, contradicting
Lemma~\ref{lem_not_in_triangle}.
%
%\hrule
%OLD PROOF:
%
%Every edge $e=vz$ of $\MST^2(S)$, $z\in S\setminus\{u,v,w\}$ that lies inside $W$ either crosses $f$ or has $z$ inside the triangle $\Delta=(u,v,w)$.
%The latter contradicts Lemma~\ref{lem_not_in_triangle}.
%%
%Obviously, $f$ is neither crossed by any edge incident to $u$ or $w$, nor crossed by any edge incident to $v$ but not lying inside $W$.
%
%It remains to prove that every long edge $e=xz$ of $\MST^2(S)$, $x,z\in S\setminus\{u,v,w\}$ that crosses $f$ fulfills Condition~\ref{lem:MST2edges:cond_long}.
%Note that for $e$ to cross $f$, either $e$ has an endpoint inside $\Delta$ or $e$ is also crossing one edge out of $\{uv,vw\}\in \MST(S)$.
%The former contradicts Lemma~\ref{lem_not_in_triangle}.
%Hence, we need to consider the case where $e$ is also crossing one edge out of $\{uv,vw\}\in \MST(S)$. 
%If $e$ is a short edge (i.e., an edge of $\MST(S)$), then this contradicts the planarity of $\MST(S)$.
%Hence, $e$ is a long edge (with wedge $W'$) and is also crossing one edge $g$ out of $\{uv,vw\}\in \MST(S)$.
%This also implies that the wedges $W$ and $W'$ intersect in their interiors but none of $W$, $W'$ is contained in the other.
%%
%Finally, if $e$ has witness $y \neq v$, then either $g$ has an end point in the triangle $xyz$ or $g$ crosses one edge out of $\{xy,yz\}\in \MST(S)$.
%Again, the former contradicts Lemma~\ref{lem_not_in_triangle} and the latter contradicts the planarity of $\MST(S)$.
%Hence the witness of $e$ must be $v$, which completes the proof.
\end{proof}

%------------------------------------------------------------------------------------
\subsection{Constructing two almost disjoint layers}\label{s:almost-disjoint}
%------------------------------------------------------------------------------------
With the above observations we can proceed to show a construction that almost works for two layers: a single edge will be part of both layers, while all other edges occur in at most one tree.
To this end we consider the minimum spanning tree $\MST(S)$ to be rooted at an arbitrary leaf~$r$.
For any $v\in S$, we define its {\em level} $\ell(v)$ as its distance to $r$ in $\MST(S)$.
That is, $\ell(v)=0$ if and only if $v=r$.
Likewise, $\ell(v)=1$ if and only if $v$ is adjacent to $r$ etc.

For any $v\in S\setminus\{r\}$, we define its {\em parent} $p(v)$ as the first vertex traversed in the unique shortest path from $v$ to $r$ in $\MST(S)$.
Similarly, we define its {\em grandparent} $g(v)$ as $g(v)=p(p(v))$ if $\ell(v)\geq 2$ and as $g(v)=r$ otherwise (i.e., $g(v)=p(v)=r$ if $\ell(v) = 1$).
Each vertex $q$ for which $v=p(q)$ is called a \emph{child} of $v$.

\begin{figure}[h]
  \centering
\includegraphics{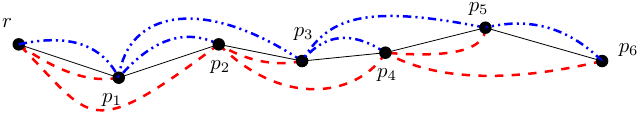}
   \caption{Example of Construction~\ref{con:dir-red-blue-doubletrees}. $\MST(S)$ is drawn in solid black, and the red and blue graphs in dashed and dot dashed, respectively. Note that the only common edge between the red and blue trees is the one from the root to its only neighbor in $\MST(S)$.}
   \label{fig:construc1}
\end{figure}

\begin{construction}\label{con:dir-red-blue-doubletrees}
Let $S$ be a finite set of points in the plane and let $\MST(S)$ be rooted at an arbitrary leaf $r\in S$.
We construct two graphs $R=G(S,E_R)$ and $B=G(S,E_B)$ as follows:
For any vertex $v_o\in S$ whose level is odd, we add the edge $v_op(v_o)$ to $E_R$ and the edge $v_og(v_o)$ to $E_B$.
For any vertex $v_e\in S\setminus\{r\}$ whose level is even, we add the edge $v_eg(v_e)$ to $E_R$ and the edge $v_ep(v_e)$ to $E_B$.
\end{construction}

For simplicity we say that the edges of $R=G(S,E_R)$ are colored red and the edges of $B=G(S,E_B)$ are colored blue.
An edge in both graphs is called red-blue.
See \fig{fig:construc1} for a sketch of the construction.

\begin{theorem}\label{thm:construction}
Let MST(S) be rooted at $r$. The two graphs $R=G(S,E_R)$ and $B=G(S,E_B)$ from Construction~\ref{con:dir-red-blue-doubletrees} fulfill the following properties:
\begin{enumerate}
\item\label{thm:construction.item1} Both $R$ and $B$ are plane spanning trees.
\item\label{thm:construction.item2} $\max\{\BE(R), \BE(B)\}\leq 2\BE(\MST(S))$.
\item\label{thm:construction.item3} $E_R\cap E_B=\{rs\}$, with $r=p(s)$, that is, $|E_R\cap E_B|=1$.
\end{enumerate}
\end{theorem}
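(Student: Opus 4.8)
The plan is to verify the three properties essentially by unwinding Construction~\ref{con:dir-red-blue-doubletrees} and then invoking Lemma~\ref{lem:MST2edges} and Lemma~\ref{lem:stay-in-wedge} for planarity. I would start with Property~\ref{thm:construction.item3}, which is the bookkeeping backbone. Every vertex $v\neq r$ contributes exactly two edges to $E_R\cup E_B$: the edge $vp(v)$ and the edge $vg(v)$ (with the convention $g(v)=r$ when $\ell(v)=1$, so these coincide only at level~$1$). By the coloring rule, for a level-odd $v$ the parent edge is red and the grandparent edge is blue, while for a level-even $v$ it is the other way round. An edge $ab$ of $\MST^2(S)$ with $\ell(a)<\ell(b)$ is a parent edge $bp(b)$ precisely when $\ell(a)=\ell(b)-1$, and it is a grandparent edge $bg(b)$ precisely when $\ell(a)=\ell(b)-2$ (or $a=r$, $\ell(b)=1$). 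Hence every short edge (a parent edge of its deeper endpoint) receives exactly one color, and every long edge of $\MST^2(S)$ that is used — these are exactly the grandparent edges — receives exactly one color; a short edge and a long edge never coincide except at the root, where $rs$ with $r=p(s)$ is simultaneously the parent edge and the grandparent edge of $s$ (since $g(s)=r=p(s)$ for $\ell(s)=1$) and therefore gets both colors. This gives $E_R\cap E_B=\{rs\}$ and $|E_R\cap E_B|=1$.

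For Property~\ref{thm:construction.item1} I would argue separately that $R$ and $B$ are spanning trees and that they are plane. For the tree property: $R$ has $n-1$ edges (one per vertex $\neq r$), so it suffices to show $R$ is connected; I would show by induction on $\ell(v)$ that $v$ is connected to $r$ in $R$, using that the red edge at $v$ goes either to $p(v)$ (when $\ell(v)$ odd) or to $g(v)$ (when $\ell(v)$ even), and that $p(v)$ and $g(v)$ have strictly smaller level. The same argument works for $B$. For planarity, note that every edge of $R$ (and of $B$) is an edge of $\MST^2(S)$, so crossings can only be of the two types classified in Lemma~\ref{lem:MST2edges}. A short–long crossing (Type~1) would require a used short edge $vz$ incident to the witness $v$ of a used long edge $uw$, with $vz$ inside the wedge of $uw$; here $uw$ being used means $uw=wg(w)$ with $v=p(w)$, so $u,w$ are children of $v$, and $vz$ being used with $z$ inside the wedge forces $z$ to be a child of $v$ lying between $u$ and $w$ in the cyclic order (this is where Lemma~\ref{lem:stay-in-wedge} pins down the geometry) — but then $vz$ is the parent edge of $z$, and I must check the colors: the grandparent long edge $wg(w)$ and the parent edge $vz=p(z)z$ receive the same color only if $\ell(w)$ and $\ell(z)$ have opposite parities, which is impossible since $w$ and $z$ are both children of $v$. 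A long–long crossing (Type~2) requires two used long edges $u w$ and $u' w'$ with the same witness $v$, i.e.\ $w,w'$ both children of $v$, hence $\ell(w)=\ell(w')$, hence $wg(w)$ and $w'g(w')$ get the same color — so this crossing, too, cannot be monochromatic. Thus neither $R$ nor $B$ contains a crossing, proving Property~\ref{thm:construction.item1}.

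Property~\ref{thm:construction.item2} is the easiest: every edge of $R$ or $B$ is either a short edge, of length $\le\BE(\MST(S))$, or a long (distance-$2$) edge $xy$ with witness $v$, of length $\le|xv|+|vy|\le 2\BE(\MST(S))$ by the triangle inequality. The main obstacle is the planarity argument: I expect the delicate point to be the Type-1 crossing case, where one must correctly combine Lemma~\ref{lem:stay-in-wedge} (to conclude that the short edge's far endpoint is again a child of $v$, wedged between the two relevant neighbors) with the parity-of-level observation to rule out a monochromatic crossing. A minor subtlety worth stating explicitly is the degenerate behavior at level~$1$ (where $g(v)=p(v)=r$, so the two edges at $v$ collapse and the single shared edge $rs$ appears), and the fact that a leaf simply contributes no edges beyond the one to its parent/grandparent, so no special treatment of leaves is needed.
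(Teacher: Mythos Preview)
Your overall strategy matches the paper's: Property~\ref{thm:construction.item3} by bookkeeping, Property~\ref{thm:construction.item2} by the triangle inequality, and planarity via the crossing classification of Lemma~\ref{lem:MST2edges} together with a parity-of-level check. The tree and bottleneck arguments are fine. The planarity argument, however, has two genuine errors.

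First, in the Type~1 case you write ``$uw$ being used means $uw=wg(w)$ with $v=p(w)$, so $u,w$ are children of $v$''. This is wrong: $u=g(w)=p(p(w))=p(v)$ is the \emph{parent} of $v$, not a child. More importantly, you restrict Type~1 to a \emph{short} edge $vz$, but Lemma~\ref{lem:MST2edges}'s first condition only says ``the other edge has $v$ as an endpoint''; that edge may well be long (namely $vg(v)$ or $dg(d)=vd$ for a grandchild $d$ of $v$). The paper handles all of these at once by observing that every used edge incident to $v$ other than $vp(v)=vu$ (which shares an endpoint with $f$) has the opposite color to $f$, since its color is governed by a level of the same parity as $\ell(v)$ while $f$'s is governed by $\ell(w)=\ell(v)+1$. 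Your invocation of Lemma~\ref{lem:stay-in-wedge} is not needed and does not do what you claim here.

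Second, your Type~2 argument is backwards. You correctly note that two used long edges with the same witness $v$ must be $wg(w)$ and $w'g(w')$ with $w,w'$ children of $v$, hence $\ell(w)=\ell(w')$, hence they get the \emph{same} color --- but then you conclude ``so this crossing cannot be monochromatic'', which is the opposite of what follows. Same color means a crossing \emph{would} be monochromatic, i.e., bad. The actual reason there is no crossing (and this is the paper's argument) is that $g(w)=g(w')=p(v)$, so the two edges share the endpoint $p(v)$ and therefore cannot cross at all. Once you note this shared endpoint, Type~2 is immediate and the parity observation is irrelevant.
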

\begin{proof}
Recall from Construction~\ref{con:dir-red-blue-doubletrees} that $r$ is a leaf of $\MST(S)$.
Hence $r$ has a unique neighbor $s$ in $\MST(S)$ and we have $r=p(s)=g(s)$ and $\ell(s)=1$.
Let \mbox{$S_o\!\subset\! S\!\setminus\!\{s\}$} be all \mbox{$v_o\!\in\! S$} whose level $\ell(v_o)$ is odd.
Likewise, let \mbox{$S_e\!\subset\! S\!\setminus\!\{r\}$} be all \mbox{$v_e\!\in\! S$} whose level $\ell(v_e)$ is even.
By construction, $E_R$ contains all the edges from odd-leveled nodes to their parents, those from even-leveled nodes to their grandparents and $rs$.
More formally,
\[
E_R = \bigcup_{v_o\in S_o}{\{v_op(v_o)\}} \cup \bigcup_{v_e\in S_e}{\{v_eg(v_e)\}} \cup \{rs\}.
\]
Similarly, $E_B$ contains edges from odd-leveled nodes to their grandparents, those from even-leveled nodes to their parents and $rs$, that is
\[
E_B = \bigcup_{v_o\in S_o}{\{v_og(v_o)\}}\cup \bigcup_{v_e\in S_e}{\{v_ep(v_e)\}} \cup \{rs\}.
\]
Thus, the edge $rs$ is the only shared edge between the sets $E_R$ and $E_B$, as stated in Property~\ref{thm:construction.item3} (we call this unique edge the {\em double-edge}).

As $E_R$ and $E_B$ are subsets of the edge set of~$\MST^2(S)$, the vertices of every edge in $E_R$ and $E_B$ have link distance at most~2 in~$\MST(S)$, and the bound on $\max\{\BE(R), \BE(B)\}$ stated in Property~\ref{thm:construction.item2} follows.

Further, both $R$ and $B$ are spanning trees, that is, connected and cycle-free graphs, as each vertex except~$r$ is connected either to its parent or grandparent in~$\MST(S)$.
To prove Property~\ref{thm:construction.item1}, it remains to show that both trees are plane.

Assume for the sake of contradiction that an edge $f$ is crossed by an edge~$e$ of the same color.
Recall that all edges of $E_R$ and $E_B$ are edges of~$\MST^2(S)$ whose endpoints have different levels.
By Lemma~\ref{lem:MST2edges}, at least one of $\{e,f\}$ has to be a long edge.
Without loss of generality let~$f = uw$ be a long edge and let $v$ be the witness of $f$ with $\ell(u) = \ell(v) - 1 = \ell(w) - 2$.
First note that $v$ cannot be an endpoint of $e$: since the level of $v$ has different parity than the one of $u$ and $w$, then $v$ must a leaf in this tree. Moreover, its only neighbor must be $u$ and thus the edges $uv$ and $f = uw$ cannot cross (and in particular $v$ cannot be an endpoint of $e$ as claimed). We further claim that $v$ cannot be the witness of $e$. Any edge that has $v$ as its witness is an edge from a child of $v$ to $u$ and therefore cannot cross $f = uw$.
As $e$ is neither incident to $v$ nor has $v$ as a witness, $e$ crossing $f$ contradicts Lemma~\ref{lem:MST2edges}.
This proves Property~\ref{thm:construction.item1} and concludes the
proof.
\end{proof}

The properties of our construction imply a first result stated in the following corollary.

\begin{corollary}\label{thm_almost}
For any set $S$ of $n$ points in the plane, there are two plane spanning trees $R=G(S,E_R)$ and $B=G(S,E_B)$ such that $|E_R\cap E_B|=1$ and $\max\{\BE(R), \BE(B)\}\leq 2\BE(\MST(S))$. 
\end{corollary}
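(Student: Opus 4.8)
The corollary is an immediate consequence of Theorem~\ref{thm:construction}, so the plan is simply to invoke it. The one point that needs checking is that Construction~\ref{con:dir-red-blue-doubletrees} can actually be carried out on $S$, i.e., that $\MST(S)$ possesses a leaf to serve as the root~$r$. For $n=1$ there is nothing to prove, and for $n\geq2$ the tree $\MST(S)$ has at least two leaves; I would fix any one of them as~$r$ and root $\MST(S)$ there.

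I would then take $R=G(S,E_R)$ and $B=G(S,E_B)$ to be exactly the two graphs produced by Construction~\ref{con:dir-red-blue-doubletrees} for this root. Theorem~\ref{thm:construction} now supplies all three assertions of the corollary directly: Property~\ref{thm:construction.item1} states that $R$ and $B$ are plane spanning trees, Property~\ref{thm:construction.item2} states $\max\{\BE(R),\BE(B)\}\leq2\BE(\MST(S))$, and Property~\ref{thm:construction.item3} states $|E_R\cap E_B|=1$. Nothing further is required.

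There is no real obstacle here: all of the substantive work — planarity of each layer, the bottleneck bound coming from the fact that every chosen edge connects a vertex to its parent or grandparent in $\MST(S)$, and the edge-disjointness up to the single shared edge $rs$ — has already been carried out in the proof of Theorem~\ref{thm:construction}. The corollary merely repackages that conclusion in a self-contained form that no longer refers to the specific construction or to the choice of root, so the proof amounts to the two sentences above.
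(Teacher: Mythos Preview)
Your proposal is correct and matches the paper's approach exactly: the paper presents this corollary without a separate proof, simply noting that ``the properties of our construction imply a first result stated in the following corollary,'' which is precisely the direct invocation of Theorem~\ref{thm:construction} that you describe.
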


Although the construction might seem to generalize to more layers by using edges of $\MST^k(S)$, this is not the case.
Already for $k=3$, we can show that the trees may be non-plane.
Take the example of \fig{fig:non-generalization}, where the full edges denote the minimum spanning tree.
However, if $a$ is chosen as the root of the tree, the edge $ad$ will be crossed by the edge from $e$ to either its parent, grandparent or great-grandparent.
In this example the problem can be remedied by choosing a different root.
But now consider placing a horizontally mirrored copy of this construction to the left so that $a$ and its mirrored version are connected by an edge.
Regardless of which root is chosen, in one of the two subtrees the node $a$ or its mirrored equivalent is the root of the respective subtree.
Hence, any root will create a crossing.

\begin{figure}
\centering
\includegraphics{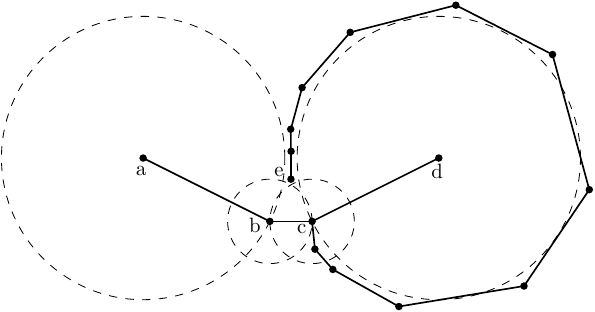}
\caption{Example graph where choosing $a$ as root creates a crossing when we generalize the above construction to three trees.}
\label{fig:non-generalization}
\end{figure}

%------------------------------------------------------------------------------------
\subsection{Avoiding the double edge}\label{s:avoid-edge}
%------------------------------------------------------------------------------------
Construction~\ref{con:dir-red-blue-doubletrees} is almost valid in the sense that only one edge was shared between both trees. In the following we modify this construction to avoid the shared edge. 

\begin{figure}
\centering
\includegraphics{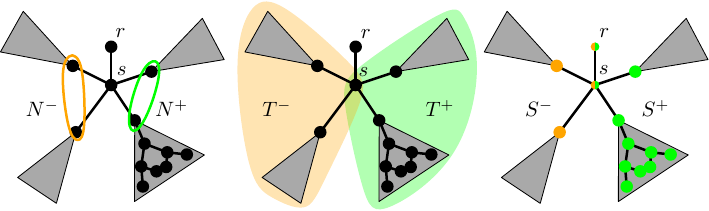}
\caption{Illustration of the various definitions used in Section~\ref{s:avoid-edge}. Grey triangles illustrate further subtrees, one shows interior vertices purely for illustrative purposes.}
\label{fig:plus-minus-trees}
\end{figure}

Let $N^-\subset (S\setminus\{r\})$ be the set of neighbors $v^-\in (S\setminus\{r\})$ of $s$ in $\MST(S)$ such that the ordered triangle $rsv^-$ is oriented clockwise.
Let $N^+\subset (S\setminus\{r\})$ be the set of neighbors $v^+\in (S\setminus\{r\})$ of $s$ in $\MST(S)$ such that the ordered triangle $rsv^+$ is oriented counterclockwise.
Let $T^-$ be the subtree of $\MST(S)$ that is connected to $s$ via the vertices in $N^-$ and let $T^+$ be the subtree of $\MST(S)$ that is connected to $s$ via the vertices in $N^+$.
Let $S^-\subset S$ consist of $r$ and the set of vertices from $T^-$ and let $S^+\subset S$ consist of $r$ and the set of vertices from $T^+$.
Observe that $S^-\cap S^+=\{r,s\}$ (see \fig{fig:plus-minus-trees}).
Let $E_R$ and $E_B$ be sets of red and blue edges as defined in the Construction~\ref{con:dir-red-blue-doubletrees}. Then let $E_R^-\subset E_R$ ($E_B^-\subset E_B$) be the subset of edges that have at least one endpoint in $S^-\setminus\{r,s\}$ and let $E_R^+\subset E_R$ ($E_B^+\subset E_B$) be the subset of edges that have at least one endpoint in $S^+\setminus\{r,s\}$.
Note that by this definition $E_R=E_R^-\cup E_R^+\cup\{rs\}$ and $E_B=E_B^-\cup E_B^+\cup\{rs\}$.
With this we define the subgraphs $R^-=G(S^-,E_R^-)$, $R^+=G(S^+,E_R^+)$, $B^-=G(S^-,E_B^-)$, and $B^+=G(S^+,E_B^+)$ and prove a useful non-crossing property between these graphs.

\begin{lemma}\label{lem:subgraphplanarity}
For any set $S$ of $n$ points in the plane, let $R=G(S,E_R)$ and $B=G(S,E_B)$ be the graphs from Construction~\ref{con:dir-red-blue-doubletrees}.
Then no edge in $E_R^-$ crosses an edge in $E_B^+$ and no edge in $E_R^+$ crosses an edge in $E_B^-$.
\end{lemma}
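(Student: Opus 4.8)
The plan is to exploit the geometric separation between the two sides $S^-$ and $S^+$, together with the crossing characterization of Lemma~\ref{lem:MST2edges}. The key structural fact is that $S^-\cap S^+=\{r,s\}$ and that all of $T^-$ lies (near $s$) on the clockwise side of the directed line $rs$ while all of $T^+$ lies on the counterclockwise side; since $\MST(S)$ is plane, the subtrees $T^-$ and $T^+$ are separated by the edge $rs$ in the sense that no edge of $\MST(S)$ goes from a vertex of $T^-\setminus\{s\}$ to a vertex of $T^+\setminus\{s\}$. I would first record this observation carefully, and also note that every edge of $E_R^-\cup E_B^-$ has both endpoints in $S^-$ (because an edge of $\MST^2(S)$ with one endpoint in $S^-\setminus\{r,s\}$ has its other endpoint at link distance at most $2$ in $\MST(S)$, hence still in $T^-$ or equal to $s$ or $r$), and symmetrically for the $+$ side. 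So an edge in $E_R^-$ and an edge in $E_B^+$ together involve vertices from $S^-\cup S^+$, and the only vertices they can share are $r$ and $s$.

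Next I would argue by contradiction: suppose $e\in E_R^-$ crosses $f\in E_B^+$ (the other case is symmetric by swapping colors). By Lemma~\ref{lem:MST2edges}, at least one of $e,f$ is a long edge, and the crossing falls into Case~1 or Case~2 of that lemma. In Case~2 both $e$ and $f$ are long edges with the \emph{same} witness $v$; but the witness of a long edge with an endpoint in $S^-\setminus\{r,s\}$ lies in $S^-$, and likewise a witness for $f$ lies in $S^+$, so a common witness $v$ must lie in $S^-\cap S^+=\{r,s\}$. The witness $s$ is excluded because a long edge with witness $s$ would have one endpoint in $N^-$ and one in $N^+$ (its two endpoints are neighbors of $s$ in $\MST(S)$), contradicting that each such edge lies entirely on one side; and $r$ cannot be a witness since $r$ is a leaf of $\MST(S)$ and has degree $1$. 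So Case~2 is impossible. In Case~1, one edge — say $f$ — is long with witness $v\in S^+$ and wedge $W$, and $e$ has $v$ as an endpoint and lies inside $W$. Then $v$, being an endpoint of $e\in E_R^-$, lies in $S^-$, so again $v\in\{r,s\}$; the leaf $r$ is ruled out as before, so $v=s$. But then $f$ is a long edge with witness $s$, again forcing one endpoint in $N^-$ and the other in $N^+$, contradiction. The symmetric subcase (with $e$ the long edge) is handled identically.

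The main obstacle is making the separation claim airtight — in particular, pinning down that a long edge of $E_R^-$ genuinely has \emph{both} endpoints and its witness inside $S^-$, and that no long edge of $\MST^2(S)$ straddles the two sides except possibly those witnessed at $s$. This requires using planarity of $\MST(S)$ and Lemma~\ref{lem_not_in_triangle} to show that the triangle of a straddling long edge would have to contain $s$ or cross $rs$, which $\MST(S)$ forbids. Once that separation lemma is in place, the case analysis via Lemma~\ref{lem:MST2edges} is short, since every surviving scenario funnels the shared witness or shared endpoint into $\{r,s\}$, and both of those are eliminated by $r$ being a leaf and by the $N^-$/$N^+$ partition of $s$'s neighbors. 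I would also double-check the edge cases where an edge of $E_R^-$ or $E_B^+$ is incident to $s$ itself, to confirm these are consistent with the side assignment and do not produce spurious crossings.
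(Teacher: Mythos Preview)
Your overall strategy---using Lemma~\ref{lem:MST2edges} to force any shared witness or endpoint into $S^-\cap S^+=\{r,s\}$ and then eliminating both---is sound and is the same idea the paper uses. But the execution has a real gap.

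The assertion that ``a long edge with witness $s$ would have one endpoint in $N^-$ and one in $N^+$'' is false. A long \emph{construction} edge with witness $s$ connects two neighbors of $s$ whose levels differ by $2$; since $\ell(s)=1$, one endpoint must be $r$ (level $0$) and the other a child $w$ of $s$ (level $2$). The crucial fact you are missing is a parity observation: level-$2$ vertices are even, so their grandparent edge $wr$ lies in $E_R$. Hence $s$ is never the witness of a \emph{blue} construction edge. This is what actually kills Case~2 and the sub-case of Case~1 in which $f\in E_B^+$ is the long edge with witness $s$.

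More seriously, the remaining sub-case of Case~1---where $e$ is the long edge with witness $v$ and $f$ has $v$ as an endpoint---is \emph{not} ``handled identically''. Here $v=s$ gives $e=rz$ with $z\in N^-$, which is a perfectly legitimate red edge, so your intended contradiction does not fire. This is precisely the case the paper isolates and treats separately (by first splitting on whether $e$ is incident to $r$): one must argue that no $f\in E_B^+$ with endpoint $s$ can lie inside the wedge of $e$. For short $f=sz'$ this uses that $z'\in N^+$ lies on the opposite side of the ray $sr$ from the wedge $W_e$; for long $f$ one combines the parity observation above with ruling out $r$ and $z$ as possible witnesses of $f$. Your proposal does not supply this argument, and the ``symmetry'' you invoke does not exist because the colors of $e$ and $f$ are fixed.
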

\begin{proof}
Consider any edge $e\in E_R^-$ that is not incident to $r$.
By Lemma~\ref{lem:MST2edges}, such an edge $e$ can be crossed only by an edge incident to at least one vertex of $S^-\setminus\{r,s\}$.
Hence, $e$ does not cross any edge of $E_B^+$.

Assume for the sake of contradiction that there is an edge $f\in E_B^+$ incident to $r$ that crosses an edge $e\in E_R^-$.
By construction, $e=rz$ is a long edge of $\MST^2(S)$ with witness $s$ and wedge $W$.
By Lemma~\ref{lem:MST2edges}, $f$ has to be incident to $s$, since $s$ cannot be the witness of any blue edges by construction. 
If $f$ is a short edge, then $f$ is not in $W$ by our definition of $S^-$ and $S^+$, which contradicts Lemma~\ref{lem:MST2edges}.
Hence, let $f=sc$ be a long edge of $\MST^2(S)$ with witness $b$. 
Following Lemma~\ref{lem:MST2edges}, the witness $b$ must be $s$, which contradicts the fact that $s$ cannot be a witness of any blue edge. 
This concludes the proof that no edge in $E_R^-$ is crossed by an edge in $E_B^+$.
Symmetric arguments prove that no edge in $E_R^+$ is crossed by an
edge in $E_B^-$.
\end{proof}

With this observation we can now prove that the two spanning trees (rooted at an arbitrary leaf $r$) from Construction~\ref{con:dir-red-blue-doubletrees} actually exist in 4 different color combination variants.

\begin{lemma}\label{lem:red-blue-tree-variants}
Let $S$ be a set of $n$ points in the plane.
Let $R=G(S,E_R)$ and $B=G(S,E_B)$ be the graphs from Construction~\ref{con:dir-red-blue-doubletrees} and let $R^-=G(S^-,E_R^-)$, $R^+=G(S^+,E_R^+)$, $B^-=G(S^-,E_B^-)$, and $B^+=G(S^+,E_B^+)$ be subgraphs as defined above.
Then $R$ and $B$ can be recolored to any of the four versions below, where each version fulfills the properties of Theorem~\ref{thm:construction}.
\begin{itemize}
\item[] (1) $R=G(S,E_R)$ and $B=G(S,E_B)$ (the \emph{original coloring})
\item[] (2) $R=G(S,E_B)$ and $B=G(S,E_R)$ (the \emph{inverted coloring})
\item[] (3) $R=G(S,E_B^-\cup E_R^+\cup\{rs\})$ and $B=G(S,E_R^-\cup E_B^+\cup\{rs\})$ (the \emph{$-$ side inverted coloring})
\item[] (4) $R=G(S,E_R^-\cup E_B^+\cup\{rs\})$ and $B=G(S,E_B^-\cup E_R^+\cup\{rs\})$ (the \emph{$+$ side inverted coloring})
\end{itemize}
\end{lemma}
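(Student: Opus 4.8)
The plan is to verify that each of the four colorings (1)--(4) satisfies the three properties of Theorem~\ref{thm:construction}, using the decomposition $E_R = E_R^-\cup E_R^+\cup\{rs\}$, $E_B = E_B^-\cup E_B^+\cup\{rs\}$ together with Lemma~\ref{lem:subgraphplanarity}. Variants (1) and (2) are immediate: (1) is literally Construction~\ref{con:dir-red-blue-doubletrees}, for which Theorem~\ref{thm:construction} was already proven, and (2) only swaps the names ``red'' and ``blue'', so nothing changes. The substance is in (3) and (4), which are symmetric to each other (swapping the roles of red and blue), so it suffices to treat one of them, say (3), where $E_R^{\mathrm{new}} = E_B^-\cup E_R^+\cup\{rs\}$ and $E_B^{\mathrm{new}} = E_R^-\cup E_B^+\cup\{rs\}$.

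For Property~\ref{thm:construction.item2} (bottleneck length), observe that all four edge sets in every variant are subsets of the edge set of $\MST^2(S)$ (each is a union of pieces of $E_R$, $E_B$, and $\{rs\}$), so every edge connects two vertices at link distance at most $2$ in $\MST(S)$, and the bound $\max\{\BE(\cdot),\BE(\cdot)\}\le 2\BE(\MST(S))$ follows exactly as in the original proof. For Property~\ref{thm:construction.item3}, note that $E_R^-$, $E_R^+$, $E_B^-$, $E_B^+$ are pairwise disjoint except that $E_R^-$ and $E_B^-$ may share edges, and $E_R^+$, $E_B^+$ may share edges — but never across the $\pm$ divide, and none of them contains $rs$. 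Hence $E_R^{\mathrm{new}}\cap E_B^{\mathrm{new}} = (E_B^-\cap E_R^-)\cup(E_R^+\cap E_B^+)\cup\{rs\}$ up to the obvious cross terms, and a short check shows each such intersection is empty except for $\{rs\}$; this uses that in Construction~\ref{con:dir-red-blue-doubletrees} an edge of $E_R\cap E_B$ can only be $rs$ (Property~\ref{thm:construction.item3} of Theorem~\ref{thm:construction}), so $E_R^-\cap E_B^- = \emptyset$ and $E_R^+\cap E_B^+=\emptyset$. Thus $|E_R^{\mathrm{new}}\cap E_B^{\mathrm{new}}| = 1$.

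For Property~\ref{thm:construction.item1} (plane spanning trees), the spanning-tree part follows because $E_R^{\mathrm{new}}$ still assigns to every vertex $v\neq r$ exactly one edge to an ancestor (parent or grandparent): for $v\in S^-\setminus\{r,s\}$ that edge comes from $E_B^-$ and for $v\in S^+\setminus\{r,s\}$ from $E_R^+$, and for $s$ we have the edge $rs$ — so the standard argument that ``each non-root vertex has a unique edge upward'' gives connectedness and acyclicity (and each piece is consistent within its own side). Planarity is the crux. We argue that $R^{\mathrm{new}} = G(S, E_B^-\cup E_R^+\cup\{rs\})$ has no monochromatic crossing. An edge of $E_B^-$ cannot cross another edge of $E_B^-$ because $B$ is plane (Theorem~\ref{thm:construction}); an edge of $E_R^+$ cannot cross another edge of $E_R^+$ because $R$ is plane; an edge of $E_B^-$ cannot cross an edge of $E_R^+$ by Lemma~\ref{lem:subgraphplanarity}; and $rs$ is an edge of $\MST(S)$, crossed (within $\MST^2(S)$) only by long edges whose witness is $s$ or by edges incident to $s$ lying in the wedge of some long edge with witness $s$ — but no blue edge has witness $s$ (by construction), and the edges of $E_B^-$, $E_R^+$ incident to $s$ are either $\MST$-edges to $N^\pm$ (not inside the relevant wedge, by the definition of $S^-,S^+$) or long edges with a witness $\neq s$, so none of them crosses $rs$. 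The same reasoning, with red and blue interchanged, handles $B^{\mathrm{new}} = G(S, E_R^-\cup E_B^+\cup\{rs\})$, and variant (4) follows from (3) by relabeling. The main obstacle is the last bullet — ruling out crossings involving the edge $rs$ and edges incident to $s$ — since this requires carefully invoking the definitions of $N^-$, $N^+$, $S^-$, $S^+$ (namely, that $\MST$-edges from $s$ to $N^\pm$ do not enter the wedge of a long edge $rz$ with witness $s$) together with the case analysis of Lemma~\ref{lem:MST2edges}; everything else is bookkeeping over the disjointness of the four edge blocks.
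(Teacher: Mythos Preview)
Your proof is correct and follows the same core idea as the paper's: variants (1) and (2) are trivial, and for (3) and (4) the planarity hinges on Lemma~\ref{lem:subgraphplanarity}. The paper's own proof is extremely terse (it just says Properties~\ref{thm:construction.item2} and~\ref{thm:construction.item3} are ``obviously true'' and then cites Lemma~\ref{lem:subgraphplanarity} for planarity), so your detailed verification of the spanning-tree structure and of Property~\ref{thm:construction.item3} is a genuine expansion rather than a different method.

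One remark: you call the treatment of the edge $rs$ ``the main obstacle'' and give a somewhat involved argument via Lemma~\ref{lem:MST2edges} and the definitions of $N^\pm$. This is unnecessary. Since $rs\in E_R\cap E_B$ by Theorem~\ref{thm:construction}, and both $R$ and $B$ are plane, $rs$ crosses no edge of $E_R$ and no edge of $E_B$; in particular it crosses nothing in $E_B^-\subset E_B$ or in $E_R^+\subset E_R$. So that case is actually the easiest, not the hardest. Similarly, for Property~\ref{thm:construction.item3} you should also note that the ``cross terms'' $E_B^-\cap E_B^+$ and $E_R^-\cap E_R^+$ are empty: every edge of $E_B^-$ (resp.\ $E_R^-$) has both endpoints in $S^-$ because parents and grandparents of vertices in $S^-\setminus\{r,s\}$ lie in $S^-$, and symmetrically for the $+$ side; you gesture at this but do not spell it out.
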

\begin{proof}
The statement is trivially true for recolorings (1) and (2).
It is easy to observe that this corresponds to a simple recoloring.
Hence, Properties~\ref{thm:construction.item2}~and~\ref{thm:construction.item3} of Theorem~\ref{thm:construction} are also obviously true.
By Lemma~\ref{lem:subgraphplanarity}, both $R$ and $B$ are plane for the recolorings (3) and (4) and thus fulfill Property~\ref{thm:construction.item1} of Theorem~\ref{thm:construction} as well.
\end{proof}

With these tools we can now show how to construct two disjoint spanning trees. For technical reasons we use two different constructions based on the existence of a vertex $v$ in the minimum spanning tree where no two consecutive adjacent edges span an angle larger than~$\pi$.

%------------------------------------------------------------------------------------------------

\begin{theorem}\label{thm:nonpointed}
Let $S$ be a set of $n$ points in the plane, and $v$ a vertex of $S$. Assume that there is a minimum spanning tree $\MST(S)$ such that the angle between any two consecutive adjacent edges of $v$ in $\MST(S)$  is smaller than $\pi$.
Then there exist two plane spanning trees $R=G(S,E_R)$ and $B=G(S,E_B)$ such that $E_R\cap E_B=\emptyset$ and $\max\{\BE(R), \BE(B)\}\leq 2\BE(\MST(S))$.
\end{theorem}
\begin{proof}
We build the two spanning trees by using the vertex $v$ to decompose the minimum spanning tree into trees where $v$ is a leaf. For each of these subtrees we apply 
Construction~\ref{con:dir-red-blue-doubletrees} and possibly recolor them in one of the variants from Lemma~\ref{lem:red-blue-tree-variants}.

Let $S_v=\{v_1, \ldots, v_{k}\}$ be the set of vertices incident to $v$ in $\MST(S)$, labeled in counterclockwise order as they appear around $v$. 
Observe that $k\geq3$ is necessary to fulfill the angle condition from the theorem.
By Lemma~\ref{lem_not_in_triangle}, the convex hull of $S_v$ contains no points of $S$ except~$v$.
We start by constructing two plane spanning trees of $S_v \cup \{v\}$.  
The red spanning tree $R_v=G(S_v \cup \{v\},E_{vR})$ contains all edges incident to~$v$ except~$vv_1$, plus the edge $v_1v_2$, which lies on the convex hull boundary of $S_v$.
The blue spanning tree $B_v=G(S_v \cup \{v\},E_{vB})$ contains all edges on the convex hull boundary of $S_v$ except $v_1v_2$, plus the edge $vv_1$.
Observe that $R_v$ and $B_v$ are plane spanning trees, $E_{vR}\cap E_{vB}=\emptyset$, and $\max\{\BE(R_v), \BE(B_v)\}\leq 2\BE(\MST(S_v \cup \{v\}))$.

Next consider a vertex $v_i$ of $S_v$ and let $M_i$ be the maximal subtree of $\MST(S)$ that is connected to $v$ by $v_i$.
Let $S_i\subset S$ be the vertex set of~$M_i$.
Note that $M_i=\MST(S_i)$ and that~$v$ is a leaf in~$M_i$.
Thus, we can use Construction~\ref{con:dir-red-blue-doubletrees} to get spanning trees $R_i=G(S_i,E_{iR})$ and $B_i=G(S_i,E_{iB})$, all rooted at $v$.
The graphs $R_i$ and $B_i$ fulfill the three properties of Theorem~\ref{thm:construction} and 
the only edge shared between $R_i$ and $B_i$ is $vv_i$.

Considering Lemma~\ref{lem:MST2edges} and the fact that for $i\neq j$ the edges of $E_{iR}\cup E_{iB}$ have no point, except for the root $v$, in common with $E_{jR}\cup E_{jB}$, it is easy to see that no edge of $E_{iR}\cup E_{iB}$ crosses any edge of $E_{jR}\cup E_{jB}$.
In order to join the graphs to two plane spanning trees on~$S$, we adapt them slightly, while keeping the properties of Theorem~\ref{thm:construction}.
We first state how we combine the edge sets of the different plane spanning trees to get $R=G(S,E_R)$ and $B=G(S,E_B)$ and 
then reason why the claim in the theorem is true for this construction.
\[E_R = E_{vR} \cup (E_{3R}\!\setminus\!\{vv_3\}) \cup\ldots\cup
(E_{kR}\!\setminus\!\{vv_k\}) \cup (E_{1R}^-\cup E_{1B}^+) \cup
(E_{2B}^-\cup E_{2R}^+)\]
\[E_B = E_{vB} \cup (E_{3B}\!\setminus\!\{vv_3\}) \cup\ldots\cup
(E_{kB}\!\setminus\!\{vv_k\}) \cup (E_{1B}^-\cup E_{1R}^+) \cup
(E_{2R}^-\cup E_{2B}^+)\]

First we add the construction for $S_v \cup \{v\}$ to both edge sets.
This is the base to which all other trees will be attached.
Then the graphs from the subtrees $M_i$ for $1\leq i\leq k$ are added to this base.
The edges $vv_i$ are already used in $R_v$ or $B_v$, so we add the edges~$vv_i$ neither from $E_{iR}$ nor $E_{iB}$.
As both $v$ and $v_i$ are connected to both colors (both spanning trees), the construction stays connected.
As we did not add any additional edges the construction obviously stays cycle-free and the edge length bound is maintained.

It remains to argue the planarity of the resulting graphs.
By Lemma~\ref{lem:MST2edges}, edges of $E_{iR}$ or~$E_{iB}$ that cross any edge of $E_{vR}$ and $E_{vB}$ have to be incident to~$v$.
By Lemma~\ref{lem:stay-in-wedge}, only the edges $e_i^-=v_{i-1}v_i$ and $e_i^+=v_{i}v_{i+1}$ (indices modulo $k$) are crossed by edges of $E_{iR}\!\setminus\!\{vv_i\}$ and $E_{iB}\!\setminus\!\{vv_i\}$. 

Using the original coloring (see Lemma~\ref{lem:red-blue-tree-variants}) for $R_i$ and $B_i$ only red edges (edges of $E_{iR}\!\setminus\!\{vv_i\}$) cross $e_i^-$ and $e_i^+$.
For any $3\leq i\leq k$, $e_i^-$ and $e_i^+$ are blue, i.e., $e_i^-,e_i^+\in E_{vB}$.

For $i=1$, the edge $e_i^+$ is red.
In this case, we use the $+$ side inverted coloring (see Lemma~\ref{lem:red-blue-tree-variants}) for $R_i$ and $B_i$ (and exclude the edge $vv_i$):
$E_{iR}= E_{iR}^-\cup E_{iB}^+$ and $E_{iB}=E_{iB}^-\cup E_{iR}^+$.
Using this coloring, all shown properties remain valid (see Lemma~\ref{lem:red-blue-tree-variants}), all edges from~$R_i$ and~$B_i$ that cross the blue edge $e_i^-$ remain red, and all edges from $R_i$ and~$B_i$ that cross the red edge $e_i^+$ are now blue.

In a similar manner we fix the case of $i=2$, where the edge $e_i^-$ is red.
We use the $-$ side inverted coloring (see Lemma~\ref{lem:red-blue-tree-variants}) for $R_i$ and~$B_i$ (and exclude the edge $vv_i$):
$E_{iR}= E_{iB}^-\cup E_{iR}^+$ and $E_{iB}=E_{iR}^-\cup E_{iB}^+$.
Again, all shown properties remain valid (see Lemma~\ref{lem:red-blue-tree-variants}).
All edges from $R_i$ and $B_i$ that cross the red edge $e_i^-$ are now blue, and all edges from $R_i$ and~$B_i$ that cross the blue edge $e_i^+$ remain red.

Hence, with this slightly adapted construction (and coloring), $R=G(S,E_R)$ and $B=G(S,E_B)$ are plane spanning trees that solely use edges of $\MST^2(S)$ and have no edge in common.
\end{proof}
%------------------------------------------------------------------------------------------------

In the remaining case, for every vertex in an $\MST(S)$ there are two consecutive adjacent edges that span an angle larger than $\pi$.
In such an $\MST(S)$, every vertex has degree at most three,
since the angle between adjacent edges is at least $\pi/3$.

%--------------------------------------------------------------------------------
\begin{theorem}\label{thm:allpointed}
Consider a set $S$ of $n\geq 4$ points in the plane for which every vertex in the minimum spanning tree $\MST(S)$ has two consecutive adjacent edges spanning an angle larger than $\pi$. 
Then there exist two plane spanning trees $R=G(S,E_R)$ and
$B=G(S,E_B)$ such that $E_R\cap E_B=\emptyset$ and $\max\{\BE(R),
\BE(B)\}\leq 3\BE(\MST(S))$.
In addition, at most one edge of $E_R\cup E_B$ is longer than $2\BE(\MST(S))$.
\end{theorem}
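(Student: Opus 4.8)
The plan is to run the almost-disjoint construction of Section~\ref{s:almost-disjoint} and then remove the single shared edge by re-attaching one leaf, at the cost of exactly one edge of length up to $3\BE(\MST(S))$. Concretely, root $\MST(S)$ at an arbitrary leaf $r$, let $s$ be the unique neighbor of $r$, and apply Construction~\ref{con:dir-red-blue-doubletrees} to obtain plane spanning trees $R=G(S,E_R)$ and $B=G(S,E_B)$. By Theorem~\ref{thm:construction} (see also Corollary~\ref{thm_almost}) we have $E_R\cap E_B=\{rs\}$, and every edge of $R\cup B$ lies in $\MST^2(S)$, hence has length at most $2\BE(\MST(S))$. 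Recall that $r$ is a leaf of $B$, its only blue edge being $rs$. We shall replace $rs$ by a single new edge $rc$ and keep $E_R$ unchanged; the resulting pair will be the desired trees, with $rc$ the unique ``long'' edge.

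For the new edge, first note that the $\MST^2$-edges incident to $r$ are exactly $rs$ together with the edges $rx$ for $x$ a child of $s$; since every child of $s$ has even level, Construction~\ref{con:dir-red-blue-doubletrees} puts each such edge $rx$ into $E_R$, and $rs$ is the shared edge. Thus \emph{every} $\MST^2$-edge at $r$ is already used by $R\cup B$, so any replacement edge at $r$ must have link-distance at least $3$ in $\MST(S)$. I take $rc$ with $c$ a grandchild of $s$ (a child of a non-leaf child $a$ of $s$): by the triangle inequality $|rc|\le 3\BE(\MST(S))$, and since $rc\notin\MST^2(S)$ it lies in neither $E_R$ nor $E_B$. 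Setting $E_B':=(E_B\setminus\{rs\})\cup\{rc\}$, one checks directly that $B'=G(S,E_B')$ is a spanning tree, that $E_R\cap E_B'=\emptyset$, and that all edges of $R\cup B'$ lie in $\MST^2(S)$ except $rc$, which has length at most $3\BE(\MST(S))$. Hence $R$ and $B'$ are edge-disjoint plane spanning trees satisfying the statement \emph{provided $rc$ is chosen so that it crosses no edge of $B\setminus\{rs\}$} (planarity of $R$ is untouched, and $B'$ is plane precisely when $rc$ crosses no remaining blue edge). The only situation in which $s$ has no non-leaf child is, since every vertex has degree at most $3$ and $n\ge 4$, that $\MST(S)$ is the star $K_{1,3}$ with center $s$ and $n=4$; then $\MST^2(S)=K_4$, and it is straightforward to split the six edges of $K_4$ into two plane spanning paths, all of whose edges have length at most $2\BE(\MST(S))$, so the statement holds (with no exceptional edge) in this case as well.

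What remains — and is the heart of the proof — is to choose the grandchild $c$ of $s$ so that the segment $rc$ avoids all blue edges. The plan is to exploit that $s$ is pointed, so all $\MST$-edges at $s$ lie in a wedge of angle less than $\pi$, and that by construction the blue edges at $s$ go only to $r$, to the children of $s$, and to the grandchildren of $s$ (indeed $sc\in E_B$ for every grandchild $c$ of $s$). One then picks $a$ to be the child of $s$ that is angularly adjacent to $r$ among the neighbors of $s$, and $c$ to be a child of $a$ lying on the side of the line through $s$ and $a$ that contains $r$; Lemma~\ref{lem:stay-in-wedge} confines the children of $a$ to a wedge at $s$, Lemma~\ref{lem_not_in_triangle} makes the triangles $rsa$ and $sac$ empty of points of $S$, and Lemma~\ref{lem:MST2edges} controls which blue edges could a priori reach the relevant region, so that together these should keep $rc$ inside the portion of the (single) outer face of $B$ incident to $rs$, whence $rc$ crosses no blue edge. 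I expect the delicate case to be the one where $r$ is the \emph{middle} of three neighbors of $s$, so that Lemma~\ref{lem:stay-in-wedge} no longer confines the grandchildren of $s$ to the small wedge at $s$; there one likely has to argue through one of the recolorings of Lemma~\ref{lem:red-blue-tree-variants}, or re-attach instead the leaf $s$ of $R$ to a link-distance-$3$ vertex within the red tree, or re-root $\MST(S)$ at a more convenient leaf (for instance one that is an extreme neighbor of its $\MST$-neighbor).
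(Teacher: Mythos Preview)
Your overall strategy---run Construction~\ref{con:dir-red-blue-doubletrees} and then replace the shared edge $rs$ in $B$ by a single link-distance-$3$ edge $rc$---is close in spirit to the paper's, but the proposal has a real gap at exactly the point you flag as ``the heart of the proof.'' Two concrete issues:

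\emph{Existence of a good $c$.} You want $c$ to be a child of the angularly-$r$-adjacent child $a$ of $s$, lying on the $r$-side of the line $sa$. Nothing guarantees such a child exists: even under the pointedness hypothesis (so $\deg\le 3$), all children of $a$ may lie on the far side of $sa$. Lemma~\ref{lem:stay-in-wedge} only confines them to the wedge at $s$ between the two neighbors of $a$ in the cyclic order around $s$; it says nothing about which side of $sa$ they occupy. Your suggested fallbacks (recoloring via Lemma~\ref{lem:red-blue-tree-variants}, re-rooting at a different leaf, or swapping to the red tree) are plausible, but none is argued, and the paper's proof shows that getting this to work forces a substantial case analysis on the local geometry of the big angles at three or four consecutive vertices.

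\emph{Planarity after insertion.} Even once $c$ is chosen, you cannot in general keep $rc$ from crossing blue edges. The paper builds a four-vertex base $P=\{v_3,v_2,v_1,v_0\}$ (with $v_3$ a leaf), colors it by hand so that the only non-$\MST^2$ edge is $v_3v_0\in E_{PB}$, attaches the remaining subtrees via Construction~\ref{con:dir-red-blue-doubletrees} with carefully selected roots and recolorings, and \emph{still} has to treat the case that $v_3v_0$ is crossed by blue edges: it then removes $v_3v_0$ and reconnects the two blue components by an edge on the convex hull of the points trapped in the relevant triangle (this edge is also of length at most $3\BE(\MST(S))$). In other words, the paper does not succeed by simply ``finding the right $c$''; it needs a secondary repair step. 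Your proposal lacks both the case analysis that limits where crossings can occur and this repair mechanism, so as written it does not establish planarity of $B'$.
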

\begin{proof}
%	Recall that in any minimum spanning tree, every pair of consecutive edges span an angle of at least $\pi/3$. 
%	Thus, every vertex in $MST(S)$ has degree $\leq 3$.
As before in Theorem~\ref{thm:nonpointed} we will use our construction scheme for trees rooted at a leaf for the majority of the points and use a small local construction that avoids double edges. In this case, instead of removing a single vertex $v$ to decompose the tree we use a set of four vertices as follows.
We start at a leaf of $\MST(S)$ to generate a connected graph $P$ with four vertices that is a subgraph of $\MST(S)$.
Then we show how to construct $R$ and $B$ for $S$ for the different cases of $P$ in combination with the remainder of $\MST(S)$.

\textbf{The construction of $\bm{P=\{v_3,v_2,v_1,v_0\}}$:}
Let $v_3$ be a leaf of $\MST(S)$.
For the construction of $P$, we root $\MST(S)$ at $v_3$.
We call the number of vertices in the (sub)tree of which that vertex is a root of (including itself) the \emph{weight} of a vertex.
Hence, the weight of $v_3$ is $n$.
Further, the angle between two successive incident edges at a vertex of $\MST(S)$ that is larger than~$\pi$ is called the \emph{big angle}.

\begin{figure}[htb]
  \centering
  \includegraphics[page=1]{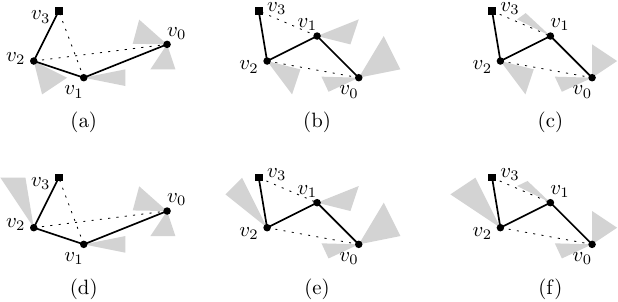}
  \caption{Case~\ref{thm:allpointed-item:1} for $P$ and the connections to the rest of $\MST(S)$.
	The gray triangles indicate possible subtrees of $\MST(S)$ and where and how they might be connected.
	Dotted edges are from $\MST^2(P)$.
	Note that the subtree with root $v_0$ can be on either side of the supporting line of $v_1v_0$ and even on both sides as indicated in the figure.
  }
  \label{fig:diffP1}
\end{figure}

Let $v_2$ be the unique child of $v_3$ in $\MST(S)$ (i.e., $v_3=p(v_2)$).
To define $v_1$ and $v_0$ we use a case distinction. 
	Consider the set $C$ of the children of $v_2$ that are not spanning the big angle with $v_3$ at $v_2$. (Note that $v_3$ may or may not be spanning the big angle at $v_2$ and that $C$ contains 0, 1, or 2 vertices). %, where the latter two cases can only appear if $v_2$ has degree three). 
\begin{enumerate}
\item\label{thm:allpointed-item:1}
	If $v_2$ has only a single child (i.e., $C$ is empty), or if $C$ contains a vertex that is not a leaf in $\MST(S)$, we choose it (or one of them) as $v_1$.
We assume w.l.o.g.\ that $v_1$ is the successor of $v_3$ in clockwise order around $v_2$.
Further, we choose $v_0$ as a child of $v_1$ such that $v_2$ and $v_0$ are consecutive around $v_1$ and do not span the big angle at $v_1$.
If $v_1$ has two children, and this is true for both, then we choose $v_0$ such that it is the successor of $v_2$ in counterclockwise order around $v_1$.
See \fig{fig:diffP1}~(a)--(f) for the six different variations of this case, i.e., all possible distributions of the positions of the big angles. Explicit definitions for these cases can be found below.
W.l.o.g., we require the subtree at $v_2$ to be nonempty in cases (d)--(f) and the subtree of $v_1$ to be nonempty in cases (c) and (f).
For all cases we assume without loss of generality that the angle $v_3v_2v_1$ is clockwise less than $\pi$.
\begin{itemize}
\item[(1a)] The edge $v_3v_2$ is \textbf{adjacent} to the big angle at $v_2$ and the angle $v_2v_1v_0$ is clockwise \textbf{less} than $\pi$. 
\item[(1b)] The edge $v_3v_2$ is \textbf{adjacent} to the big angle at $v_2$, the angle $v_2v_1v_0$ is clockwise \textbf{greater} than $\pi$, and the edge $v_2v_1$ is \textbf{adjacent} to the big angle at $v_1$. 
\item[(1c)] The edge $v_3v_2$ is \textbf{adjacent} to the big angle at $v_2$, the angle $v_2v_1v_0$ is clockwise \textbf{greater} than $\pi$, and the edge $v_2v_1$ is \textbf{not adjacent} to the big angle at $v_1$. 
\item[(1d)] The edge $v_3v_2$ is \textbf{not adjacent} to the big angle at $v_2$ and the angle $v_2v_1v_0$ is clockwise \textbf{less} than $\pi$. 
\item[(1e)] The edge $v_3v_2$ is \textbf{not adjacent} to the big angle at $v_2$, the angle $v_2v_1v_0$ is clockwise \textbf{greater} than $\pi$, and the edge $v_2v_1$ is \textbf{adjacent} to the big angle at $v_1$. 
\item[(1f)] The edge $v_3v_2$ is \textbf{not adjacent} to the big angle at $v_2$, the angle $v_2v_1v_0$ is clockwise \textbf{greater} than $\pi$, and the edge $v_2v_1$ is \textbf{not adjacent} to the big angle at $v_1$. 
\end{itemize}

\item\label{thm:allpointed-item:2} $C$ contains at least one vertex and all vertices in $C$ are leaves in $\MST(S)$. 
Note that this implies that $v_2$ has degree exactly three in $MST(S)$, as the degree of any vertex with a big angle in a minimum spanning tree is at most three
and thus $C$ would be empty if $v_2$ had degree at most two.
%If $v_2$ had degree at most two, then $C$ would be empty and since the maximal degree of any MST vertex with a large angle is three we know that $v_2$ has degree exactly three.
We choose a vertex of $C$ as $v_1$ (assuming w.l.o.g.\ that $v_1$ is the successor of $v_3$ in clockwise order around $v_2$), and choose the other child of $v_2$ as $v_0$.  
Note that if $n \geq 5$ then $v_0$ cannot be a leaf in $\MST(S)$ and hence $v_0$ spans a big angle with $v_3$ at $v_2$.
Therefore, taking the location of the big angle at $v_2$ into account, there are two possibilities for the counterclockwise order of incident edges around $v_2$.
See \fig{fig:diffP2}~(a) and~(b).
\begin{itemize}
\item[(2a)] The angle $v_1v_2v_0$ is clockwise \textbf{less} than $\pi$ (the edge $v_3v_2$ is \textbf{adjacent} to the big angle at $v_2$.) 
\item[(2b)] The angle $v_1v_2v_0$ is clockwise \textbf{greater} than $\pi$ (the edge $v_3v_2$ is \textbf{not adjacent} to the big angle at $v_2$).
Here, both $v_1$ and $v_0$ must be leaves implying that $n=4$ and we do not have any subtrees. 
\end{itemize}	
\end{enumerate}

\begin{figure}[htb]
  \centering
  \includegraphics[page=2]{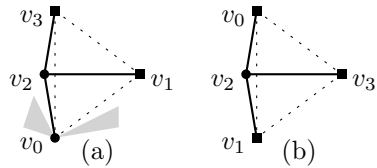}
  \caption{Case~\ref{thm:allpointed-item:2} for the selection of $P$ and the connections to the rest of $\MST(S)$.
    Vertices drawn with squares are leaves of $\MST(S)$.
    The gray triangles indicate possible subtrees of $\MST(S)$ and where they might be connected.
    Dotted edges are from $\MST^2(P)$.
    Note that the subtree with root $v_0$ can be on either side of the supporting line of $v_1v_0$ and even on both sides as indicated in the figure.
    Further note that case (b) can appear only if $S$ contains exactly $4$ vertices.
  }
  \label{fig:diffP2}
\end{figure}

\textbf{The construction of $\bm{R}$ and $\bm{B}$:}
First we show how to construct the trees $R_P=G(\{v_3,v_2,v_1,v_0\}, E_{PR})$ and $B_P=G(\{v_3,v_2,v_1,v_0\},E_{PB})$.
The vertices of $P$ can either be in convex position or form a triangle with one interior point, with $v_1$ interior for the cases shown in \fig{fig:diffP1}~(b) and~(e), and $v_2$ interior for the cases shown in \fig{fig:diffP1}~(e) and~(f);
there are no other non-convex versions: otherwise either the path $v_3,v_2,v_1,v_0$ could not be in $\MST(S)$,
or one of the vertices of $P$ could not be incident to a big angle (recall Lemma~\ref{lem_not_in_triangle}).
As in any non-convex case the complete graph on $\{v_3,v_2,v_1,v_0\}$ is crossing-free, 
any construction of $R_P$ and $B_P$ for the convex cases is also valid for the non-convex cases.
Let us thus assume the points of $P$ to be in convex position.
For Cases~1a~and~1d the vertices $\{v_0,v_1,v_2,v_3\}$ must appear in this order on their convex hull and we set $R_P = \{v_2v_3, v_3v_1, v_1v_0\}$ and $B_P = \{v_3v_0, v_0v_2, v_2v_1\}$, as illustrated in \fig{fig:colP}~(a).
For Cases~1b, 1c, 1e, and 1f the vertices appear in the order $\{v_3,v_1,v_0,v_2\}$ on their convex hull.
(The order $\{v_0,v_1,v_2,v_3\}$ violates the fact that the clockwise angle $v_0v_1v_2$ is not large and the order $\{v_0, v_3,v_1,v_2\}$ has edges $v_0v_1$ and $v_2v_3$ crossing, which are both edges of $\MST(S)$; no other orderings exist when accounting for symmetry.)
For these cases we set $R_P = \{v_3v_2, v_2v_1, v_1v_0\}$ and $B_P = \{v_1v_3, v_3v_0, v_0v_2\}$, as in \fig{fig:colP}~(b).
All edges except the edge $v_3v_0$ (in $E_{PB}$) are from $\MST^2(S)$ and have endpoints with different levels in $\MST(S)$ rooted at $v_3$. In contrast, $v_3v_0$ is an edge of $\MST^3(S)$, which could be crossed by other edges of the construction. We will later discuss how to handle this.

For Case~\ref{thm:allpointed-item:2} the placement must be convex as $v_0,v_1$ and $v_3$ are adjacent to $v_2$ and the clockwise $v_0v_2v_3$ and $v_1v_2v_0$ are convex for Cases 2a and 2b, respectively.
Since the ordering around $v_2$ is fixed (modulo symmetry) by the case definition, the vertices appear in the order $\{v_2,v_3,v_1,v_0\}$ for Case 2a and $\{v_2,v_0,v_3,v_1\}$ for Case 2b.
For Case 2a we set $R_P = \{v_2v_3,v_3v_0,v_0v_1\}$ and $B_P = \{v_3v_1,v_1v_2,v_2v_0\}$ and for Case 2b we set $R_P = \{v_2v_0,v_0v_1,v_1v_3\}$ and $B_P = \{v_1v_2,v_2v_3,v_3v_0\}$ as illustrated in \fig{fig:colP}~(c)~and~(d).
For both cases, all edges are from $\MST^2(S)$.

\begin{figure}[htb]
  \centering
  \includegraphics[page=3]{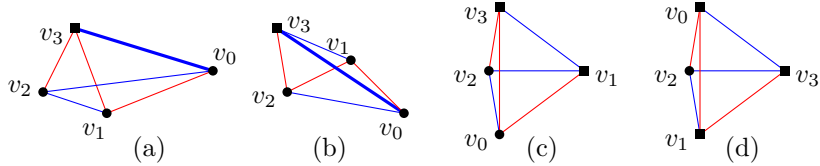}
  \caption{The different colorings for $P$: (a) for the cases from \fig{fig:diffP1}~(a) and~(d), (b) for the cases from \fig{fig:diffP1}~(b), (c), (e), and (f), (c) for the case from \fig{fig:diffP2}~(a), and (d) for the case from \fig{fig:diffP2}~(b).
	  The bold edge in (a) and (b) is an edge of $\MST^3(S)$, that is, an edge with link distance $3$ in $\MST(S)$.
  }
  \label{fig:colP}
\end{figure}

With $R_P$ and $B_P$ as a base, we now create red and blue trees for all remaining subtrees of $MST(S)$ and ``attach'' them to the base.
For Case~\ref{thm:allpointed-item:1} we define three possible subtrees.
Let $T_0'=G(S_0',E_0')$ be the subtree (i.e., connected component) of $\MST(S)$ that contains $v_0$ when removing $v_1$ (and its incident edges) from $\MST(S)$.
Likewise, let $T_1'=G(S_1',E_1')$ be the subtree of $\MST(S)$ that contains $v_1$ when removing $v_0$ and $v_2$ from $\MST(S)$, and
let $T_2'=G(S_2',E_2')$ be the subtree of $\MST(S)$ that contains $v_2$ when removing $v_1$ and $v_3$ from $\MST(S)$.
For Case~\ref{thm:allpointed-item:2}(a) there is one possible subtree $T_0'=G(S_0',E_0')$, which is the subtree of $\MST(S)$ that contains $v_0$ when removing $v_2$ from $\MST(S)$. (Case~\ref{thm:allpointed-item:2}(b) appears only if $n=4$ and hence the construction is already completed.)
The subtrees $T_0'$, $T_1'$, and $T_2'$ are shown as (pairs of) gray triangles in \figurename~\ref{fig:diffP1} and~\ref{fig:diffP2}.
To connect these subtrees to the bases~$R_P$ and~$B_P$, we create corresponding trees $T_0$, $T_1$, and $T_2$ depending on the different shown cases, then apply Construction~\ref{con:dir-red-blue-doubletrees} to them, and possibly recolor them using Lemma~\ref{lem:red-blue-tree-variants}.

We first consider the different subtrees for Case~\ref{thm:allpointed-item:1}. In essence, for each tree we pick a neighbor from $\{v_3,v_2,v_1,v_0\}$ to add to $T_0',T_1',T_2'$, which we then use as root for Construction~\ref{con:dir-red-blue-doubletrees}.
When there is a choice we pick a root that is adjacent to the outgoing edge from $v_i$ into the subtree $T_i'$ as defined more precisely below. 

\textbf{$\bm{T_0}$:} For all cases, we consider the subtree $T_0=G(S_0,E_0)$ of $\MST(S)$, with $S_0=S_0'\cup\{v_1\}$, $E_0=E_0'\cup\{v_1v_0\}$.
We root $T_0$ at $r = v_1$ (observe, $v_1$ is a leaf in $T_0$ with unique child $s=v_0$) and apply Construction~\ref{con:dir-red-blue-doubletrees} to get $R_0=G(S_0,E_{0R})$ and $B_0=G(S_0,E_{0B})$, with the double-edge $rs$ removed from both $E_{0R}$ and~$E_{0B}$.

\textbf{$\bm{T_1}$:} For the cases depicted in \fig{fig:diffP1}~(a), (b), (d), and~(e), we define $T_1=G(S_1,E_1)$ of $\MST(S)$, such that $S_1=S_1'\cup\{v_0\}$, $E_1=E_1'\cup\{v_1v_0\}$.
We root $T_1$ at $r = v_0$ (observe, $v_0$ is a leaf in $T_1$ with unique child $v_1$) and apply Construction~\ref{con:dir-red-blue-doubletrees} to get $R_1=G(S_1,E_{1R})$ and $B_1=G(S_1,E_{1B})$, with the double-edge $rs$ removed from both $E_{1R}$ and $E_{1B}$.

In the cases shown in \fig{fig:diffP1}~(c)~and~(f), we define $T_1=G(S_1,E_1)$ of $\MST(S)$, such that $S_1=S_1'\cup\{v_2\}$, $E_1=E_1'\cup\{v_1v_2\}$.
We root $T_1$ at $r = v_2$ (observe, $v_2$ is a leaf in $T_1$ with unique child $v_1$) and apply Construction~\ref{con:dir-red-blue-doubletrees} to get $R_1=G(S_1,E_{1R})$ and $B_1=G(S_1,E_{1B})$, with the double-edge $rs$ removed from both $E_{1R}$ and $E_{1B}$.

\textbf{$\bm{T_2}$:} For the cases depicted in \fig{fig:diffP1}~(a)--(c), let $T_2=G(S_2,E_2)$ be a subtree of $\MST(S)$, such that $S_2=S_2'\cup\{v_1\}$, $E_2=E_2'\cup\{v_1v_2\}$.
We root $T_2$ at $r = v_1$ (observe, $v_1$ is a leaf in $T_2$ with unique child $v_2$) and apply Construction~\ref{con:dir-red-blue-doubletrees} to get $R_2=G(S_2,E_{2R})$ and $B_2=G(S_2,E_{2B})$, with the double-edge $rs$ removed from both $E_{2R}$ and $E_{2B}$.

For the cases depicted in \fig{fig:diffP1}~(d)--(f), $T_2=G(S_2,E_2)$ is the subtree of $\MST(S)$, such that $S_2=S_2'\cup\{v_3\}$, $E_2=E_2'\cup\{v_3v_2\}$.
We root $T_2$ at $r = v_3$ (observe, $v_3$ is a leaf in $T_2$ with unique child $v_2$) and apply Construction~\ref{con:dir-red-blue-doubletrees} to get $R_2=G(S_2,E_{2R})$ and $B_2=G(S_2,E_{2B})$, with the double-edge $rs$ removed from both $E_{2R}$ and $E_{2B}$.

\medskip
It is easy to see that the edge sets $E_{PR}$, $E_{0R}$, $E_{1R}$, $E_{2R}$, $E_{PB}$, $E_{0B}$, $E_{1B}$, and $E_{2B}$ are all individually edge~disjoint.
In the following, we describe how these edge sets are combined to form the two plane spanning trees $R$ and $B$ in the different cases; see~\fig{fig:combinedRB1} for the convex versions and \fig{fig:combinedRB1nonconv} for the non-convex versions of the corresponding cases illustrated in \fig{fig:diffP1}.
For the non-convex cases only points $v_1$ and $v_2$ can be in the interior as $v_3$ or $v_0$ being in the interior would violate Lemma~\ref{lem_not_in_triangle}. 
Furthermore, in some of the cases (a)--(f), further restrictions apply as listed below.
\begin{itemize}
\item[(a)] Neither $v_1$ nor $v_2$ can be the middle point as both clockwise angles $v_3v_2v_1$ and $v_2v_1v_0$ have angle less than $\pi$.
\item[(b)] Only $v_1$ can be in the center (the subtree of $v_2$ is nonempty and hence $v_2$ would not be incident to a big angle). 
\item[(c)] Neither $v_1$ nor $v_2$ can be in the center (both subtrees are nonempty).
\item[(d)] Similar to (a).
\item[(e)] Both $v_1$ and $v_2$ may be the middle point.
\item[(f)] Only $v_2$ can be in the center (the subtree of $v_1$ is non-empty).
\end{itemize}

\begin{figure}[htb]
  \centering
  \includegraphics[page=4]{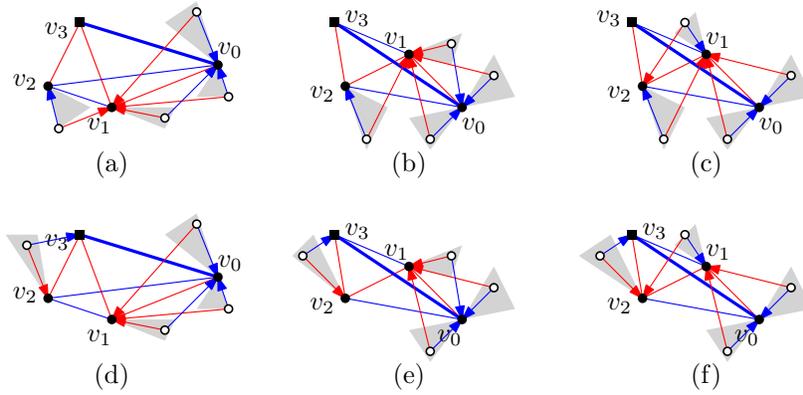}
  \caption{The different plane spanning trees $R$ and $B$ for case~\ref{thm:allpointed-item:1} when $P$ is in convex position.
    The bold blue edges $v_3v_0$ are edges of $\MST^3(S)$, i.e., edges with link distance $3$ in $\MST(S)$, that might still be crossed.
    Note that $v_3v_0$ can only be crossed in cases~(a) and~(b), as in the other cases this edge is surrounded by ``uncrossable'' $\MST^2(S)$-edges.
  }
  \label{fig:combinedRB1}
\end{figure}

\begin{figure}[htb]
  \centering
  \includegraphics[page=7]{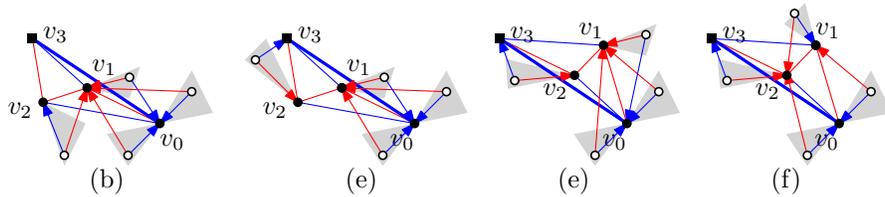}
  \caption{The different plane spanning trees $R$ and $B$ for case~\ref{thm:allpointed-item:1} when $P$ is not in convex position, with $v_1$ or $v_2$ 
	  in the interior. The case numberings are the same as the ones in \fig{fig:diffP1}.
  }
  \label{fig:combinedRB1nonconv}
\end{figure}

First we add the red and blue trees for $T_0$.
By construction, only edges of $E_{0R}$ connect to $v_1$ (only crossing edges of $E_{PB}$) and the edges of $E_{0B}$ do not cross any edge outside $T_0$.
For the cases (a), (b), (d), and (e)
we use the inverted coloring (see Lemma~\ref{lem:red-blue-tree-variants}) for the two trees of $T_1$.
For the remaining cases (c) and (f)
we use the original coloring (see Lemma~\ref{lem:red-blue-tree-variants}) for the two trees of $T_1$.
For adding the red and blue trees for $T_2$ we use the original coloring for the cases (a-c), and the inverted coloring for the cases (d-f). 
For the sake of simplicity, we exchange the names of the according sets $E_{iR}$ and $E_{iB}$ whenever we use the inverted coloring for a $T_i$, $i\in \{1,2\}$.

Since the edge sets $E_{0R}, E_{1R}$, and $E_{2R}$ form spanning trees connecting the nodes of $T_0$, $T_1$, and $T_2$ and since $E_{PR}$ connects their roots and $v_3$ with a spanning tree, it follows that $R=G(S,E_{PR}\cup E_{0R}\cup E_{1R}\cup E_{2R})$ is a spanning tree for $S$. The same argument applies to show that $B=G(S,E_{PB}\cup E_{0B}\cup E_{1B}\cup E_{2B})$ is a spanning tree. All edges in $E_{0R}, E_{1R}, E_{2R}, E_{0B},E_{1B}$, and $E_{2B}$ are from $\MST^2(S)$ and all edges of $E_{PR}$ and $E_{PB}$ are from $\MST^2(S) \cup \{v_3v_0\}$, it follows that $\max\{\BE(R), \BE(B)\}\leq 3\BE(\MST(S))$, with only the edge $v_3v_0$, which may occur in $E_{PB}$ being possibly larger than $2\BE(\MST(S))$. What remains is to show that both $R$ and $B$ are non-crossing.

The $\MST^3(S)$-edge $v_3v_0$ from $E_{PB}$ is also the only edge that could cause a crossing, see Lemma~\ref{lem:MST2edges} and Theorem~\ref{thm:construction}.
Hence, $R$ is a plane spanning tree.
If $v_3v_0$ is not crossed by any other edge of $E_{B}$ then also $B$ is a plane spanning tree and we are done.
Otherwise, note first that by Lemma~\ref{lem_not_in_triangle}, the triangles $v_3v_2v_1$ and $v_2v_1v_0$ cannot contain any points of $S$. 
Then observe that for the cases in \fig{fig:combinedRB1}~(b),~(c),~(e) and~(f) any edge crossing $v_3v_0$ that does not have $v_0,v_1,v_2$ or $v_3$ as an endpoint must cross an \MST-edge between $v_0,v_1,v_2$ and $v_3$. 
This implies that any $\MST^2$-edge that crosses $v_3v_0$ must have $v_0,v_1,v_2$ or $v_3$ as its witness by Lemma~\ref{lem:MST2edges}.
By construction, $v_0,v_1,v_2$ and $v_3$ are not a witness to any blue edge between vertices in $S \setminus \{v_0,v_1,v_2,v_3\}$.
The edges in $E_{PB}$ are incident to $v_0$ or $v_3$ so they also cannot cross $v_3v_0$.

For cases from \fig{fig:combinedRB1}~(a) and~(d), as well as all cases from \fig{fig:combinedRB1nonconv}, 
observe first that $v_0v_1$, $v_1v_2$, and $v_2v_3$ cannot be crossed, again due to Lemma~\ref{lem:MST2edges} and the fact that $v_0,v_1,v_2$ and $v_3$ cannot be a witness to any long edge in the grey subtrees by construction. 
So any edge that crosses $v_3v_0$ must have an endpoint in the interior of the convex hull of $P$ or connect directly to $v_1$ or $v_2$.
The latter however cannot happen: The only points connecting with blue edges to $v_1$ or $v_2$ are direct neighbors of these vertices, which reside in the large-angled wedge $v_1v_2v_3$ or $v_0v_1v_2$, respectively.
(Here, the \emph{large-angled} wedge $v_1v_2v_3$ is the wedge spanned by the a ray from $v_2$ to $v_1$ and a ray from $v_2$ to $v_3$ so that the wedge has an opening angle larger than~$\pi$.
The large-angled wedge $v_0v_1v_2$ is defined analogously.)
Hence, if $v_0v_3$ is crossed by some blue edge, there must be a nonempty set $X\subset S\setminus P$ that resides in the interior of the convex hull of $P$. 
In the cases depicted in \fig{fig:combinedRB1}~(a) and~(d), $X$ lies in the triangle $\Delta$ spanned by $v_3$, $v_0$, and the intersection of $v_3v_1$ and $v_2v_0$.
In the cases depicted in \fig{fig:combinedRB1nonconv}, the $X$ lies in the triangle $\Delta$ spanned by $v_3$, $v_0$, and the vertex of $P$ in the interior of the convex hull of $P$. 
Further, removing the edge $v_0v_3$ from $B$ splits $B$ into two connected components that are each plane trees, where $v_3$ is in one and $v_0$ is in the other component.
Now consider the convex hull of $X \cup \{v_0, v_3\}$, and the path along the boundary of that convex hull between $v_0$ and $v_3$ that contains at least one vertex of $X$. 
This path contains exactly one edge $e$ that connects the two components of $B$. 
Due to the construction of $B$ and $R$, $e$ can neither be part of $R$ (as the two endpoints of $e$ must reside in two different subtrees of $v_0,v_1$ or $v_2$) nor cross any edge of $B$ (as the only possibly intersected segment of the convex hull boundary of $X$ was $v_0v_3$). 
Further, the length of $e$ must be less than $3\BE(\MST(P))$, as $e$ lies inside the triangle $\Delta$, and as all sides of $\Delta$ are bounded from above by $3\BE(\MST(P))$. 
Hence, as $v_3v_0$ was the only edge that could be crossed by others, the replacement of $v_3v_0$ by $e$ in $B$ results in two edge-disjoint plane spanning trees $R$ and $B$ with maximum edge length less than $3\BE(\MST(P))$.

\bigskip
As for Case~\ref{thm:allpointed-item:2}~(b), $S$ consists only of four vertices and hence \fig{fig:colP}~(d) already shows all of the two trees $R$ and $B$, it remains to consider the subtree for Case~\ref{thm:allpointed-item:2}~(a).

\begin{figure}[htb]
  \centering
  \includegraphics[page=5]{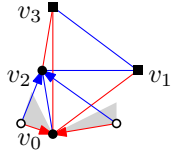}
  \caption{The plane spanning trees $R$ and $B$ for Case~\ref{thm:allpointed-item:2}.}
  \label{fig:combinedRB2}
\end{figure}

\textbf{$\bm{T_0}$:} Consider the subtree $T_0=G(S_0,E_0)$ of $\MST(S)$, with $S_0=S_0'\cup\{v_2\}$, $E_0=E_0'\cup\{v_2v_0\}$.
We root $T_0$ at $r = v_2$ (observe, $v_2$ is a leaf in $T_0$ with unique child $s=v_0$) and apply Construction~\ref{con:dir-red-blue-doubletrees} to obtain edge sets $E'_{0R}$ and $E'_{0B}$, since we will add connectivity between $r=v_2$ and $s=v_0$ using $E_{PR}$ and $E_{PB}$ we remove the edge $rs$ to obtain $R_0=G(S_0,E_{0R})$ and $B_0=G(S_0,E_{0B})$, with $E_{0R} = E'_{0R} \setminus \{rs\}$ and $E_{0B} = E'_{0B} \setminus \{rs\}$.
We use the inverted coloring as defined in Lemma~\ref{lem:red-blue-tree-variants} for the two trees of $T_0$, implying that
the edges connecting to $v_2$ and crossing red edges of $E_{PR}$ are all blue.
Hence the graphs $R=G(S,E_{PR}\cup E_{0R})$ and $B=G(S,E_{PB}\cup E_{0B})$ are plane spanning trees, $E_R\cap E_B=\emptyset$, and $\max\{\BE(R), \BE(B)\}\leq 2\BE(\MST(S))$.

This concludes the proof.
\end{proof}

\begin{corollary}
For any set $S$ of $n\geq 4$ points in the plane, there are two plane spanning trees $R=G(S,E_R)$ and $B=G(S,E_B)$ such that $E_R\cap E_B=\emptyset$ and  $\max\{\BE(R), \BE(B)\}\leq 3\BE(\MST(S))$.
\end{corollary}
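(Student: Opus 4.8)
The plan is to derive this corollary directly from Theorems~\ref{thm:nonpointed} and~\ref{thm:allpointed} via a case distinction on the angular structure of $\MST(S)$ at its vertices. The starting observation is that the general position assumption (no three collinear points) guarantees that for every vertex $w$ of $\MST(S)$, no two edges incident to $w$ are collinear through $w$; hence every angular gap between two cyclically consecutive edges around $w$ is strictly different from $\pi$, that is, it is either smaller than $\pi$ or larger than $\pi$. Consequently each vertex of $\MST(S)$ is of exactly one of two types: either all of its consecutive angular gaps are smaller than $\pi$, or it has at least one consecutive pair of incident edges spanning an angle larger than $\pi$.

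I would then split accordingly. If some vertex $v$ of $\MST(S)$ has all consecutive angular gaps smaller than $\pi$, then $v$ must have degree at least $3$: a leaf has a single gap of $2\pi$, and a degree-$2$ vertex has two gaps summing to $2\pi$, so one of them is at least $\pi$ and hence, by general position, larger than $\pi$. Thus the hypothesis of Theorem~\ref{thm:nonpointed} is met, and it yields two plane spanning trees $R$ and $B$ on $S$ with $E_R\cap E_B=\emptyset$ and $\max\{\BE(R),\BE(B)\}\leq 2\BE(\MST(S))\leq 3\BE(\MST(S))$. In the complementary case every vertex of $\MST(S)$ has two consecutive incident edges spanning an angle larger than $\pi$; since $n\geq 4$, Theorem~\ref{thm:allpointed} applies and produces two plane spanning trees $R$ and $B$ with $E_R\cap E_B=\emptyset$ and $\max\{\BE(R),\BE(B)\}\leq 3\BE(\MST(S))$. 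In both cases the trees asserted by the corollary exist, which completes the argument.

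Since the two constructive ingredients are already established, this argument has essentially no hard step; the two points that genuinely need checking are that the case distinction is exhaustive -- which is exactly the ``no angular gap equals $\pi$'' consequence of general position -- and that the vertex supplied in the first case has degree at least $3$, so that Theorem~\ref{thm:nonpointed} is truly applicable (a one-line angle-sum count, as above). I would also remark briefly that the bound $3\BE(\MST(S))$ matches the lower bound announced in the introduction, so the corollary is best possible up to the value of the constant.
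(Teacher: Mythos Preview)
Your proposal is correct and matches the paper's intended derivation: the corollary is stated immediately after Theorems~\ref{thm:nonpointed} and~\ref{thm:allpointed} without proof, precisely because it follows from the obvious case split you describe. One small remark: the degree-$\geq 3$ check is not needed to invoke Theorem~\ref{thm:nonpointed}, since its hypothesis is exactly the angle condition you already assumed (the paper notes $k\geq 3$ inside that theorem's proof as a consequence, not as an additional hypothesis).
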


We now show that the above construction is worst-case optimal. 

\begin{theorem}\label{Lower-Bounds}
For any $n>3$ and $k > 1$ there is a set $S$ of $n$ points such that for any $k$ disjoint spanning trees, at least one has a bottleneck edge larger than $(k+1) \BE(\MST(S))$.
\end{theorem}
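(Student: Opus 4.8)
The plan is to exhibit, for every $n>3$ and $k>1$, a concrete point configuration in which the minimum spanning tree has a very short bottleneck edge while any long edge that is forced into one of the $k$ layers must be at least $(k+1)$ times as long. The natural candidate is a \emph{path-like cluster with one far-away point}: take $k$ tightly packed clusters $C_1,\dots,C_k$ (or a single cluster of $n-1$ points) arranged so that $\BE(\MST(S))$ is realized by some very short distance $\delta$, and place one additional ``remote'' vertex $x$ at distance roughly $(k+1)\delta$ from the cluster, positioned so that the only MST edge incident to $x$ is the single edge of length about $(k+1)\delta$, and so that every other point of $S$ lies within a small ball of radius $O(\delta)$. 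The precise placement should guarantee that $x$ has exactly one MST-neighbour and that in $\MST(S)$ the edge at $x$ is the unique long edge; the rest of the tree lives inside the small cluster.

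The key combinatorial step is a counting/degree argument on $x$. In each of the $k$ layers $H_i$, the vertex $x$ must have degree at least $1$ (each layer is spanning and connected), so $x$ is incident to at least $k$ edges in total across the layers, and these edges are pairwise distinct since the layers are edge-disjoint. Now I would argue that at most a bounded number — ideally exactly $k-1$, or even fewer — of these edges can have length close to the minimum: concretely, if the cluster points nearest to $x$ are arranged so that only $k-1$ of them lie within distance $< (k+1)\delta$ of $x$ (the configuration is built precisely to make this count come out right), then at least one of the $\ge k$ layer-edges at $x$ must go to a cluster point at distance $\ge (k+1)\delta$, forcing a bottleneck edge $> (k+1)\,\BE(\MST(S))$. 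Equivalently, one can phrase it via the disk of radius $(k+1)\delta$ centred at $x$: the configuration puts only $k-1$ points of $S$ (other than $x$) inside that disk, so the $k$-th edge leaving $x$ must exit it.

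To make this airtight I would (i) give explicit coordinates: e.g. place $n-1$ points on a tiny concave arc or short near-horizontal segment with consecutive gaps slightly under $\delta$ and total diameter still $<(k+1)\delta$, then place $x$ far below so that its nearest cluster point is at distance just over $k\delta$ but just under $(k+1)\delta$ while all but the $k-1$ closest cluster points are at distance $>(k+1)\delta$ — and verify the MST claim by checking that no edge incident to $x$ other than the shortest one can be an MST edge (using that the cluster is already internally connected by edges of length $<\delta$); (ii) possibly scale so $\BE(\MST(S)) = 1$ for cleanliness; (iii) handle the case $n-1 < k$ (when there are not even $k-1$ cluster points available) — but since $k$ layers each need $x$ to have degree $\ge 1$ and there are only $n-1$ other vertices, if $n-1 \le k$ the pigeonhole is even stronger and the argument only gets easier, though one must double-check the statement's hypotheses $n>3$, $k>1$ are compatible with the chosen $n$; the intended regime is presumably $n$ large relative to $k$, matching the earlier remark that a neighbourhood of radius $\Omega(k\,\BE(\MST(S)))$ may be needed.

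The main obstacle will be the geometry of step (i): simultaneously (a) keeping $\BE(\MST(S))$ equal to the intended short value, (b) ensuring the MST structure near $x$ is exactly one long edge (no accidental short MST edges to $x$, and no accidental long MST edges forced elsewhere in the cluster), and (c) controlling \emph{exactly} how many cluster points fall inside the radius-$(k+1)\delta$ disk around $x$ so that the pigeonhole bites at the threshold $(k+1)$ rather than something weaker. A robust way to get (c) is to place the cluster points along a ray \emph{away} from $x$ so their distances to $x$ increase steadily, pick the gap so that precisely the first $k-1$ of them are within the disk, and use general position (no three collinear, as assumed throughout) by an infinitesimal perturbation that changes none of the strict inequalities. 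Once the configuration is fixed, the layer argument itself is the short, clean part.
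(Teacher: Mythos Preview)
Your construction is internally inconsistent, and the flaw is fatal for the argument. You want $\BE(\MST(S))=\delta$ to be realised inside the cluster, yet you place the remote vertex $x$ so that its \emph{nearest} neighbour is at distance just under $(k+1)\delta$. But the minimum spanning tree must contain some edge incident to $x$, and the shortest such edge has length close to $(k+1)\delta$; hence $\BE(\MST(S))\approx(k+1)\delta$, not $\delta$. With the correct value of $\BE(\MST(S))$ plugged in, your pigeonhole at $x$ only yields an edge of length exceeding $\BE(\MST(S))$ itself, which is trivial, rather than $(k+1)\,\BE(\MST(S))$.

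More generally, a degree argument at a single vertex cannot reach the factor $k+1$. If you fix the configuration so that $\BE(\MST(S))$ really is the unit nearest-neighbour distance (say, $n$ evenly spaced near-collinear points), then the extremal vertex has $k$ other points at distances $1,2,\dots,k$, so all $k$ of its layer edges can have length at most $k\cdot\BE(\MST(S))$; the local argument tops out at factor $k$, one short of what is claimed. The paper's proof instead uses a \emph{global edge count} on the same near-collinear example: there are only $kn-\binom{k+1}{2}$ edges of length strictly below $(k+1)\,\BE(\MST(S))$, whereas $k$ edge-disjoint spanning trees require $k(n-1)$ edges, and the first quantity is smaller whenever $k>1$. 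That counting is the key idea you are missing.
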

\begin{proof}
A counterexample simply consists of $n$ points equally distributed on a line segment. The points can be slightly perturbed to obtain general position (similar to \fig{fig:construc1}).
In this problem instance there are $kn-(k(k+1)/2)$ edges whose distance is strictly less than $(k+1 )\BE(\MST(S)) = k+1$.
However, we need $kn - k$ edges for $k$ disjoint trees and thus it is impossible to construct that many trees with sufficiently short edges.
\end{proof}

%------------------------------------------------------------------------
\section{Distributed Approach}\label{sec_dist}
%------------------------------------------------------------------------
The previous construction relies heavily on the minimum spanning tree
of $S$. It is well known that this tree cannot be constructed locally,
thus we are implicitly assuming that the network is constructed by a
single processor that knows the location of all other
vertices.  In ad-hoc
networks, it is often desirable that each vertex can compute its
adjacencies using only local information, i.e., using only information about vertices at most a certain maximum distance away.

In the following, we provide an alternative construction. Although, for any fixed $k$, the length of the edges is increased by a constant factor of $12 \sqrt{2} k$ (see Theorem~\ref{thm:main2} below for details), it has the benefit that it can be constructed locally and that it can be extended to compute $k$ layers, for $k \leq n/12$. The only global property that is needed is a value~$\beta$ that should be at least $\BE(\MST(S))$. We also note that these plane disjoint graphs are not necessarily trees, as large cycles cannot be detected locally. 

Before we describe our approach, we report the result of
Biniaz and Garc\'ia~\cite{biniaz_garcia}
that every point set of at least $3k$
points contains $k$ layers. Since the details of this construction are
important for our construction, we add a proof sketch.

\begin{theorem}[\cite{biniaz_garcia}]
\label{theo_layers}
  Every finite point set that consists of at least $3k$ points contains $k$ layers. 
\end{theorem}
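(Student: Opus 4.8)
The plan is to build the $k$ layers greedily, adding one plane spanning tree at a time, by exhibiting in each step a plane Hamiltonian path whose edges have not yet been used. The key combinatorial fact is that a plane Hamiltonian path on $m$ points in general position exists and, more importantly, can be chosen to avoid any ``small'' set of forbidden edges. Concretely, after $i-1$ layers have been extracted, at most $i-1$ edges are incident to any fixed vertex are forbidden at that vertex, so as long as $m \ge 3k$ the remaining points still admit enough freedom to route a fresh plane path; the bound $3k$ is exactly what makes the bookkeeping go through.

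First I would recall the standard way to produce a plane Hamiltonian path on a point set: pick an extreme point (say the lowest one), and repeatedly take a ``rotational'' sweep, or equivalently use the fact that any point set in general position has a plane spanning path that can be built by a divide-and-conquer on a halving line. Then I would enrich this with the disjointness requirement: I would argue that one can always choose the first and last vertices of the new path from among vertices that currently have low degree in the union of the previous layers (such vertices exist by an averaging argument, since the previous $i-1$ layers contribute only $O(im)$ edges in total, hence a positive fraction of the $m$ points have degree $O(i)$ there). Having fixed suitable endpoints, I would route the path so that at each point of high ``used-degree'' the path enters and leaves along directions not yet occupied; since each point has at most $i-1 \le k-1$ used edges and a plane path needs only $2$ free slots at an internal vertex and $1$ at an endpoint, the condition $m \ge 3k$ guarantees enough angular room at every vertex to steer clear of all forbidden edges.

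The main obstacle I expect is the global consistency of these local routing decisions: making a plane path is easy, and avoiding a bounded number of edges at any single vertex is easy, but doing both simultaneously while keeping the path connected and noncrossing requires a careful inductive construction — most naturally, a recursive partition of $S$ by a line into two parts of balanced size, recursively building subpaths in each half that avoid the forbidden edges, and then joining them by a single new edge across the cut that is itself not forbidden and crosses nothing. Verifying that such a crossing edge always exists (again using that only $O(k)$ edges are forbidden and $m \ge 3k$), and that the recursion bottoms out correctly on sets of size $O(k)$, is where the real work lies; everything else is routine. Since the excerpt only promises a proof \emph{sketch} of Garc\'ia's theorem, I would present exactly this outline — greedy layer extraction plus a recursive, forbidden-edge-avoiding plane path construction enabled by the $3k$ slack — and defer the detailed case analysis of the base case and the cross-cut edge selection to the cited manuscript.
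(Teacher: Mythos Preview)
Your approach is fundamentally different from the paper's and, more importantly, does not actually establish the claim. The paper's sketch is an explicit, non-greedy construction: take a center point $c$ of $S$, order the points angularly around $c$ as $v_0,\ldots,v_{n-1}$, and for the $j$-th layer split the plane into three angular sectors by the rays from $c$ through $v_j$, $v_{\lfloor n/3\rfloor+j}$, $v_{\lfloor 2n/3\rfloor+j}$. The center-point property forces each sector to be convex, so connecting every point in a sector to that sector's representative yields a plane star; linking the three stars gives one plane spanning tree. Edge-disjointness across layers is automatic because different layers use different triples of representatives, and the bound $n\ge 3k$ is exactly what permits $k$ such rotations. The constant $3$ is not a slack parameter tuned to make bookkeeping work; it is the centerpoint depth.

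Your greedy Hamiltonian-path scheme rests on a claim you neither prove nor make plausible: that a plane Hamiltonian path on $m$ points can avoid any prescribed set of at most $O(k)$ forbidden edges per vertex whenever $m\ge 3k$. (Incidentally, after $i-1$ paths each vertex carries up to $2(i-1)$ used edges, not $i-1$.) The ``angular room'' intuition is misleading: a plane path is a global object, and having two unused directions at every vertex does not imply a crossing-free Hamiltonian path through them exists. Your recursive halving-line idea has the same defect you flag yourself --- after building plane subpaths in each half, the single cross-cut edge must join \emph{endpoints} of the two subpaths without crossing either subpath, and nothing in your outline controls where those endpoints land. You then defer ``the real work'' to the cited manuscript, but that manuscript uses the centerpoint construction above, not anything resembling your scheme, so the deferral points nowhere. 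As it stands, this is a different strategy whose feasibility is open, not a sketch of Garc\'ia's proof.
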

\begin{proof}
First, recall that for every set of $n$ points, there is a
\emph{center point} $c$ such that every line through $c$ splits the point set into two parts that each
contain at least $\lfloor n/3 \rfloor$ points, see e.g.\ Chapter 1 in~\cite{matousek} (note that $c$ need not be one of the initial $n$ points).  For ease of explanation, we assume
that every line through $c$ contains at most one point.
 Number the points $v_0,v_1,\ldots,v_{n-1}$ in clockwise circular order
around $c$. 
We split the plane into three angular regions by the three rays
originating from $c$ and passing
through $v_0$, $v_{\floor{\frac{n}{3}}}$, and $v_{\floor{\frac{2n}{3}}}$;
see \fig{fig:ConnectingInternally}. 
Since every line through the center contains
at least $n/3$ points on each side, the smaller angle of the two rays defining a region is at most $\pi$ and thus the three angular regions are convex.
We declare $v_0$ to be the representative of the angular region
between the rays
 through $v_0$ and $v_{\lfloor
  \frac{n}{3} \rfloor}$ and connect the vertices $v_1, ..., v_{\lfloor
  \frac{n}{3} \rfloor}$ in this region to $v_0$.  Similarly, we assign $v_{\lfloor
  \frac{n}{3} \rfloor}$ to be the representative of angle between the
rays center through $v_{\lfloor \frac{n}{3} \rfloor}$ and
$v_{\lfloor \frac{2n}{3} \rfloor}$ and connect vertices $v_{\lfloor
  \frac{n}{3} \rfloor + 1}, ..., v_{\lfloor \frac{2n}{3} \rfloor}$ to
$v_{\lfloor \frac{n}{3} \rfloor}$. Finally, we
 connect vertices $v_{\lfloor \frac{2n}{3}
  \rfloor + 1}, ..., v_{n-1}$ to $v_{\lfloor \frac{2n}{3} \rfloor}$.
This results in a non-crossing spanning tree.

For the second tree, we rotate the construction and we use $v_1$, $v_{\lfloor \frac{n}{3}
  \rfloor + 1}$, and $v_{\lfloor \frac{2n}{3} \rfloor + 1}$ to define the three regions,
and so on.
\end{proof}

\begin{figure}[ht]
  \centering
  \includegraphics{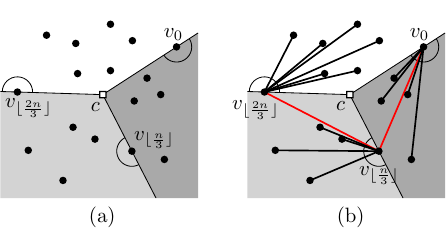}
  \caption{Extracting one layer: (a) The three sectors defined by $v_0$, $v_{\lfloor \frac{n}{3} \rfloor}$, and $v_{\lfloor \frac{2n}{3} \rfloor}$. (b) Connecting the points to the representative of their sector. The red edges connect the representatives.}
  \label{fig:ConnectingInternally}
\end{figure}

While this construction provides a simple method of constructing the $k$ layers, it does not give any guarantee on the length of the longest edge in this construction. To give such a guarantee, we combine it with a bucketing approach: we partition the point set using a grid (whose size will depend on $k$ and $\beta$), solve the problem in each box with sufficiently many points independently, and then combine the subproblems to obtain a global solution (see \fig{fig:DistributedApproach}).

\begin{figure}[ht]
  \centering
  \includegraphics{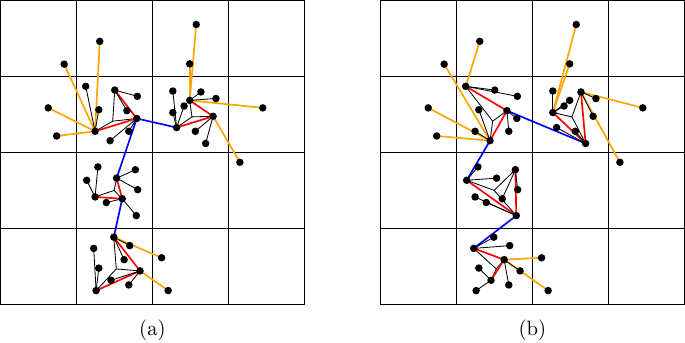}
  \caption{The distributed approach: a grid is placed over the point set and different representatives construct different graphs ((a) and (b)). The red and black edges form the tree in each dense cell, blue edges connect the dense cells, and orange edges connect the vertices in sparse cells.}
  \label{fig:DistributedApproach}
\end{figure}

We place a grid with cells of height and width $6k\beta$ and classify the points according to which grid cell contains them (if a point lies exactly on a separating line, we assign it an arbitrary adjacent cell).
We say that a grid cell is a \emph{dense box} if it contains at least $3k$ points of $S$.
Similarly, it is a \emph{sparse box} if it contains points of $S$ but is not dense.
Two boxes are adjacent if they share some boundary or vertex. Hence, each box has 8 neighbors. This is also referred to as the 8-neighbor topology. 
We observe that dense and sparse boxes satisfy the following properties.

\begin{lemma}\label{lem_nonadjacent}
Given two non-adjacent boxes $B$ and $B'$, the points in $B$ and $B'$ cannot be connected by edges of length at most $\beta$ using only points from sparse boxes.
\end{lemma}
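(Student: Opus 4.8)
The plan is to use a simple distance argument based on the grid geometry. Recall that the grid cells have height and width $6k\beta$, and that $\beta \geq \BE(\MST(S))$. First I would observe that if two boxes $B$ and $B'$ are non-adjacent (they do not share an edge or a corner), then any point in $B$ and any point in $B'$ are at Euclidean distance strictly greater than $6k\beta$: between the two cells there is at least one full row or column of grid cells separating them, so the horizontal or vertical gap alone is at least $6k\beta$.

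Next I would argue about paths that stay inside sparse boxes. Suppose, for contradiction, that a point $p \in B$ is connected to a point $q \in B'$ by a path $p = x_0, x_1, \ldots, x_m = q$ where every edge $x_{i}x_{i+1}$ has length at most $\beta$ and every intermediate vertex $x_1, \ldots, x_{m-1}$ lies in some sparse box. The key quantitative point is that a sparse box contains at most $3k - 1$ points of $S$. Since consecutive vertices on the path are within distance $\beta$ of each other, the path can only ``escape'' a given sparse box after visiting vertices of that box, and each such box contributes at most $3k-1$ vertices before the path must leave it. I would track the total Euclidean displacement: each edge moves the path by at most $\beta$, so after passing through $t$ sparse boxes the path has used at most (roughly) $t(3k-1) + 1$ edges and hence has displacement at most $(t(3k-1)+1)\beta$ — but this overcounts, so the cleaner route is to bound how far the path can get while confined to a single sparse box and its immediate neighborhood.

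The cleanest version of the argument: since every vertex strictly between $p$ and $q$ is in a sparse box, and each sparse box has fewer than $3k$ points, a maximal run of consecutive path vertices lying in one fixed sparse box has fewer than $3k$ vertices and therefore spans a Euclidean distance of less than $3k\beta$ (at most $3k-1$ edges of length $\le \beta$ wait — one needs $3k$ edges to be safe; in any case less than $3k\beta$, and moreover the run stays within distance $3k\beta$ of any of its vertices, all of which lie in a box of side $6k\beta$). To actually travel from $B$ to the non-adjacent box $B'$, the path must traverse the separating strip of width $\ge 6k\beta$. But I claim it cannot: starting from $p$, the path's first edge reaches a point within distance $\beta$ of $B$, hence still in $B$ or an adjacent box; then it enters a sparse box and is ``trapped'' in the sense that it can only advance by less than $3k\beta$ before exhausting that box's points, and the next box it can reach must be adjacent to the current one. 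An induction on the boxes visited shows the path can only ever reach boxes adjacent to $B$, never $B'$. The main obstacle is making the ``trapped in a sparse box'' step fully rigorous — specifically ruling out that the path leaves a sparse box, wanders through several other sparse boxes, and returns, thereby accumulating displacement; this is handled by noting that to reach a non-adjacent box one must cross a strip wider than what any single hop ($\le \beta$) can bridge, and the only way to cross it is to have a vertex in an intermediate box, which (being non-adjacent to $B$ on one side) forces the same situation recursively, so a shortest such path yields a contradiction via minimality.
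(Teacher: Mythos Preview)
Your proposal has a genuine gap. The claim that ``an induction on the boxes visited shows the path can only ever reach boxes adjacent to $B$'' is simply false as stated, and your attempted patch via minimality does not work either. Nothing prevents a path from moving through a chain of sparse boxes: from $B$ to an adjacent sparse box $C_1$, then to a sparse box $C_2$ adjacent to $C_1$ (but possibly non-adjacent to $B$), and so on. Each hop of length $\le \beta$ keeps you in the current box or an adjacent one, and each sparse box has at most $3k-1$ points --- but you never combine these two facts into a contradiction. Your minimality sketch says the intermediate box is ``non-adjacent to $B$ on one side,'' but the first intermediate box \emph{is} adjacent to $B$, so no recursion is triggered; and once you allow one step away from $B$, the same reasoning allows arbitrarily many.

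What the paper does that you are missing: it tracks the sequence $\sigma$ of grid \emph{sides} the path crosses (with adjacent duplicates removed). First it argues that horizontal and vertical sides must alternate in $\sigma$: two consecutive sides of the same type would force the path to cross an entire box (distance $6k\beta$) using only the at most $3k-1$ points of that one sparse box, which is impossible. Then, because $B$ and $B'$ are non-adjacent, somewhere in this alternating sequence there must be a triple $H$--$V$--$H$ where the two horizontal sides have different $y$-coordinates (otherwise the path never changes its vertical grid-row by more than one). Between those two $H$-crossings the path is confined to the two boxes sharing the vertical side $V$, and must cover vertical distance at least $6k\beta$; but two sparse boxes supply at most $6k-2$ points, hence at most $(6k-1)\beta$ of path length. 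That is the contradiction. The crossing-sequence bookkeeping is exactly the ``fully rigorous'' step you identified as the obstacle but did not supply.
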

\begin{proof}
Suppose the contrary and let $B$ and $B'$ be two boxes such that there is a path that uses edges of length at most $\beta$ between a point in $B$ and a point in $B'$ visiting only points in sparse boxes.
This path crosses the sides of a certain number of boxes in a given order; let $\sigma$ be the sequence of these sides, after repeatedly removing adjacent duplicates.
Observe first that horizontal and vertical sides alternate in~$\sigma$, as otherwise the path would traverse the cell width of $6k\beta$ using at most $3k-1$ points connected by edges of length at most $\beta$.
Since $B$ and $B'$ are non-adjacent, w.l.o.g., there is a vertical side~$s$ that has two adjacent horizontal sides in~$\sigma$ with different $y$-coordinates.
Hence, between the two horizontal sides, the corresponding part of the path has length at least $6k\beta$, and may use only the points in the two boxes adjacent to~$s$.
But since any sparse box contains at most $3 k - 1$ points and the distance between two consecutive points along the path is at most $\beta$, that part of the path can have length at most $(6k-1) \beta$, a contradiction.
\end{proof}

\begin{corollary}\label{cor_8neighbor}
Dense boxes are connected by the 8-neighbor topology. 
\end{corollary}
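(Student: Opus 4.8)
The plan is to deduce Corollary~\ref{cor_8neighbor} directly from Lemma~\ref{lem_nonadjacent}. Recall that a grid cell naturally has two kinds of grid neighbors: the four cells sharing an edge with it, and the four cells sharing only a corner; together these eight cells form the \emph{8-neighbor topology}. What we must show is that if two dense boxes $B$ and $B'$ are \emph{not} among each other's 8-neighbors (equivalently, $B$ and $B'$ are non-adjacent in the sense of Lemma~\ref{lem_nonadjacent}), then no edge of the global construction joins a point of $B$ to a point of $B'$; conversely, the dense boxes that \emph{are} connected to each other are exactly those related by the 8-neighbor topology.

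First I would pin down the connectivity rule used when combining the per-box solutions: by the description of the distributed approach (\fig{fig:DistributedApproach}), the ``blue'' edges connecting dense cells and the ``orange'' edges running through sparse cells all have length at most $\beta$ (indeed, the whole point of the grid size $6k\beta$ and of using $\beta \geq \BE(\MST(S))$ is that inter-cell links never need to be longer than $\beta$). Then I would argue: take any two dense boxes $B, B'$ that are linked, directly or via a chain of points lying in sparse boxes, by edges of length at most $\beta$. If $B$ and $B'$ were non-adjacent, Lemma~\ref{lem_nonadjacent} says such a connection through sparse-box points is impossible; a direct edge between them is also impossible since a single edge of length $\le\beta$ cannot span two non-adjacent $6k\beta\times 6k\beta$ cells (the centers-to-nearest-points distance already exceeds $\beta$ for $k\ge 1$). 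Hence any linked pair of dense boxes must be adjacent, i.e.\ 8-neighbors; conversely adjacent dense boxes are close enough to be linked by the construction. This is exactly the statement that dense boxes are connected according to the 8-neighbor topology.

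The only mildly delicate point, and the one I would be most careful about, is the boundary case of the geometric distance argument: a horizontal or vertical edge of length exactly $\beta$ between two cells that share only a corner, or between cells two apart. Two cells that touch only at a corner have their closest points at distance $0$ at the corner, so they \emph{are} adjacent and do lie in each other's 8-neighborhood — no contradiction there. The substantive case is cells at grid-distance $\ge 2$ in some coordinate: then any point of $B$ and any point of $B'$ differ by at least $6k\beta > \beta$ in that coordinate, so no single short edge crosses, and Lemma~\ref{lem_nonadjacent} rules out a short path through sparse boxes. Thus the corollary follows with essentially no further computation, being a one-line consequence of Lemma~\ref{lem_nonadjacent} together with the elementary observation that the grid cells have side length $6k\beta \ge 6\beta > \beta$.
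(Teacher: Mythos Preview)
You have misread what the corollary asserts, and this leads to a genuine gap. The statement ``dense boxes are connected by the 8-neighbor topology'' means that the graph whose vertices are the dense boxes and whose edges are 8-adjacencies is \emph{connected}; it is used later precisely to guarantee that the inter-box connection rules produce a globally connected layer. It is \emph{not} a claim about which pairs of dense boxes end up joined by edges of the construction, and in fact the corollary is stated before any inter-box edges have been introduced.

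Your proposed argument rests on the claim that the blue and orange inter-cell edges of the construction have length at most $\beta$. This is false: those edges can be as long as $12\sqrt{2}\,k\beta$ (see the lemma immediately preceding the final theorem). More importantly, even if it were true, showing that each individual inter-box edge stays between 8-neighbors does not by itself prove that \emph{all} dense boxes lie in a single connected component of the 8-neighbor graph. The intended (and essentially one-line) argument is different: since $\beta \geq \BE(\MST(S))$, any two dense boxes $B$ and $B'$ are joined by a path in $\MST(S)$ whose edges all have length at most~$\beta$; walk along this path and record the sequence of dense boxes it enters. Between two consecutive dense boxes in this sequence the path uses only points in sparse boxes, so by Lemma~\ref{lem_nonadjacent} those two dense boxes must be adjacent. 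Chaining these adjacencies gives an 8-neighbor path from $B$ to $B'$.
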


\begin{lemma}
Any finite set~$S$ of at least $4 \cdot (3k -1) + 1$ points with $\beta \geq \BE(\MST(S))$ contains at least one dense box. 
\end{lemma}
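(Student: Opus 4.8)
The plan is a short counting argument by contradiction, leveraging Lemma~\ref{lem_nonadjacent} together with the structure of the minimum spanning tree. Suppose, for contradiction, that $S$ contains no dense box. Then every non-empty grid cell is a sparse box, and hence contains at most $3k-1$ points of~$S$.

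First I would use the minimum spanning tree to control connectivity. Since $\beta \geq \BE(\MST(S))$, every edge of $\MST(S)$ has length at most $\beta$. Now take any two non-empty boxes $B$ and $B'$, pick points $a \in B \cap S$ and $b \in B' \cap S$, and consider the (unique) path between $a$ and~$b$ in $\MST(S)$. This path uses only edges of length at most~$\beta$, and by our assumption it passes only through points lying in sparse boxes. By the contrapositive of Lemma~\ref{lem_nonadjacent}, $B$ and $B'$ must therefore be adjacent (in the 8-neighbor topology). Since $B$ and $B'$ were arbitrary non-empty boxes, the set of non-empty boxes forms a clique in the king graph on grid cells.

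Next I would bound the size of such a clique. If cells $c_1,\dots,c_m$ are pairwise 8-adjacent, then any two of them differ by at most~$1$ in each coordinate; consequently all of their column indices lie among two consecutive values, and likewise all of their row indices, so the cells are contained in a single $2\times 2$ block of cells. Hence $m \leq 4$. Therefore
\[
|S| \;=\; \sum_{B \text{ non-empty}} |B \cap S| \;\leq\; 4 \cdot (3k-1),
\]
contradicting the hypothesis $|S| \geq 4\cdot(3k-1)+1$. Thus $S$ must contain at least one dense box.

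The argument is essentially routine; the only point requiring care is the correct reading of ``adjacent'' as king-adjacency (so that Lemma~\ref{lem_nonadjacent} applies to diagonal neighbours as well), and the elementary observation that a king-graph clique has at most four cells. If one preferred to avoid invoking Lemma~\ref{lem_nonadjacent} directly, an alternative is to note that any $\MST(S)$ edge of length $\le \beta < 6k\beta$ connects points in the same or in 8-adjacent cells, so the non-empty cells are connected in the king graph; this weaker statement alone does not suffice, however, so the full strength of Lemma~\ref{lem_nonadjacent} (forcing a clique rather than mere connectivity) is what makes the count work.
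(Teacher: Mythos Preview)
Your proof is correct and follows essentially the same approach as the paper: assume all boxes are sparse, use Lemma~\ref{lem_nonadjacent} together with the $\MST(S)$ path (all edges of length at most~$\beta$, all vertices in sparse boxes) to force the non-empty boxes to be pairwise king-adjacent, and then obtain the contradiction from the count. The only difference is expository: the paper phrases the pigeonhole step as ``at least five boxes, hence some pair is non-adjacent'' and leaves that claim implicit, whereas you spell out the equivalent fact that a king-graph clique has at most four cells.
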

\begin{proof}
Assume $S$ and $\beta$ induce only sparse boxes.
This implies that the points are distributed over at least five boxes, and thus, there is a pair of boxes that is non-adjacent.
Using Lemma~\ref{lem_nonadjacent}, this means that these boxes cannot be connected using edges of length at most $\BE(\MST(S))$, a contradiction.
\end{proof}

\begin{lemma}\label{lem_sparsenotfar}
In any finite set~$S$ of at least $4 \cdot (3k -1) + 1$ points with $\beta \geq \BE(\MST(S))$, all sparse boxes are adjacent to a dense box.
\end{lemma}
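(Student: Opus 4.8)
The plan is to argue by contradiction. Suppose $B$ is a sparse box that is not adjacent to any dense box. By the previous lemma (every set of at least $4\cdot(3k-1)+1$ points with $\beta\ge\BE(\MST(S))$ has a dense box) there is at least one dense box~$D$, and since every edge of $\MST(S)$ has length at most $\beta$ and $\MST(S)$ is connected, there is a path $\pi$ in $\MST(S)$ from some point $x\in B$ to some point in~$D$. Walking along $\pi$ from $x$, let $B'$ be the first visited box that is dense \emph{or} adjacent to a dense box (such a box exists, as $D$ itself qualifies). I would first observe that $B'$ cannot be dense: the box visited by $\pi$ immediately before $B'$ is either $B$ (impossible, since $B$ is not adjacent to a dense box) or some box that is then adjacent to the dense box $B'$, contradicting the minimality of $B'$. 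Hence $B'$ is sparse and adjacent to a dense box~$D'$, and $B'\neq B$. Moreover, every box strictly preceding $B'$ on $\pi$ is non-empty (it contains a vertex of $\pi$), hence not dense, hence sparse; so the prefix $\pi_1$ of $\pi$ from $x$ up to its last vertex in $B'$ uses only points in sparse boxes and connects $B$ to~$B'$. By Lemma~\ref{lem_nonadjacent}, $B$ and $B'$ must therefore be adjacent.

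At this point we know $B$ is adjacent to the sparse box $B'$, which in turn is adjacent to the dense box~$D'$, but $B$ is not adjacent to~$D'$; in the grid this forces $B$ and $D'$ to be at $L_\infty$-distance exactly two, with $B'$ sitting between them. To derive the contradiction I would extend $\pi_1$ by the single edge of $\pi$ that enters $D'$, obtaining a path $\pi_2$ from $x\in B$ to a point $z\in D'$ all of whose vertices lie in sparse boxes except for the final vertex~$z$; in particular $z$ lies within $\beta$ of the side of $B'$ shared with~$D'$. Now I would re-run the crossing-sequence analysis from the proof of Lemma~\ref{lem_nonadjacent} on $\pi_2$: since $B$ and $D'$ differ by at least two in one coordinate, the sequence $\sigma$ of box-sides crossed by $\pi_2$ must contain two like-oriented sides at grid distance at least two with a perpendicular side crossed between them, so the corresponding portion of $\pi_2$ has length at least $6k\beta$ while being confined to the two boxes adjacent to that perpendicular side — both of which are sparse (the dense box $D'$ is only touched at the very last vertex~$z$). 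Such a portion may use at most $2(3k-1)=6k-2$ points, hence has length at most $(6k-1)\beta<6k\beta$, a contradiction. Here I would also invoke Corollary~\ref{cor_8neighbor} to limit where $D'$ can lie relative to $B$ and $B'$, so that only the cases covered by the above counting can occur.

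The main obstacle is exactly this last step: Lemma~\ref{lem_nonadjacent} as stated only rules out \emph{purely} sparse paths between non-adjacent boxes, whereas our path ends in a dense box, so the conclusion ``$B$ adjacent to~$B'$'' alone is not yet a contradiction — one genuinely has to reopen the box-traversal argument and verify that the relevant stretch of the path is trapped inside two sparse boxes while being forced to span a full box width or height. I expect the bookkeeping of which boxes that stretch can touch (and checking the possible relative positions of $B$, $B'$, $D'$) to be the delicate part; everything else is routine given Lemma~\ref{lem_nonadjacent}, Corollary~\ref{cor_8neighbor}, and the connectivity of $\MST(S)$.
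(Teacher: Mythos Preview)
Your detour through the intermediate box $B'$ is unnecessary, and the attempt to close the argument afterwards contains a real gap.

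The paper's proof is a single sentence: follow the $\MST(S)$ path from a point in $B$ until it first enters a dense box $D'$; every intermediate vertex lies in a sparse box, and $B$ is not adjacent to $D'$ by hypothesis, so Lemma~\ref{lem_nonadjacent} applied to the pair $(B,D')$ gives the contradiction directly. Your worry that Lemma~\ref{lem_nonadjacent} ``only rules out purely sparse paths'' misreads the lemma. Its statement speaks of connecting ``the points in $B$ and $B'$ \dots\ using only points from sparse boxes''; the endpoints are allowed to sit in $B$ and $B'$ regardless of whether those are dense, and the counting in its proof only ever uses that the \emph{intermediate} boxes are sparse (the path enters $B'$ once and stops, so $B'$ never has to be traversed). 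Hence the direct application is legitimate, and no re-opening of the box-traversal argument is needed.

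Even granting your narrower reading, the proposed fix does not go through. You define $D'$ existentially as ``a dense box adjacent to $B'$'' and then write ``extend $\pi_1$ by the single edge of $\pi$ that enters $D'$.'' But $\pi$ need not enter $D'$ at all: after $B'$, the path $\pi$ may move to another sparse box $B''$ (which is allowed to be adjacent to a dense box, since minimality only constrains boxes \emph{before} $B'$) and only later reach some dense box $D''\neq D'$ that is not adjacent to $B'$. So there is in general no edge of $\pi$ entering $D'$, and the path $\pi_2$ you want to analyse does not exist. The subsequent crossing-sequence bookkeeping therefore has nothing to work on. The clean remedy is simply to drop $B'$ and apply Lemma~\ref{lem_nonadjacent} to $B$ and the first dense box on $\pi$, as the paper does.
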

\begin{proof}
This follows from Lemma~\ref{lem_nonadjacent}, since any sparse box that is not adjacent to a dense box cannot be connected to any dense box using edges of length at most $\beta \geq \BE(\MST(S))$. 
\end{proof}

Next, we assign all points to dense boxes.
In order to do this, let $c_B$ be the center of a dense box $B$.
Note that $c_B$ is not necessarily the center point of the points in this box.
We consider the Voronoi diagram of the centers of all dense boxes and assign a point $p$ to $B$ if $p$ lies in the Voronoi cell of $c_B$.
Let $S_B$ be the set of points of $S$ that are associated with a dense box $B$.
We note that each dense box $B$ gets assigned at least all points in its own box, since in the case of adjacent dense boxes, the boundary of the Voronoi cell coincides with the shared boundary of these boxes (see \fig{fig:VoronoiCells}).

\begin{figure}[ht]
  \centering
  \includegraphics{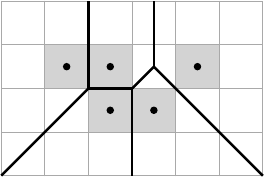}
  \caption{The Voronoi cells of the centers of the dense boxes.}
  \label{fig:VoronoiCells}
\end{figure}

Furthermore, we can compute the points assigned to each box locally.
By Lemma~\ref{lem_sparsenotfar} all sparse boxes are adjacent to a dense box, and hence for any point~$p$ in a sparse box~$B$ its distance to its nearest center is at most $3\ell / \sqrt{2}$, where $\ell = 6k\beta$.
It follows that only the centers of cells of neighbors and neighbors of neighbors need to be considered.

\begin{lemma}\label{lem_convexdisjoint}
  For any two dense boxes $B$ and $B'$, we have that the convex hulls of $S_B$ and $S_{B'}$ are disjoint. 
\end{lemma}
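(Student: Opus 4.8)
The plan is to exploit the fact that each $S_B$ lives essentially "near" the box $B$: the own points of $B$ are inside $B$, and the only extra points assigned to $B$ come from sparse boxes, which by Lemma~\ref{lem_sparsenotfar} are adjacent to $B$ and whose points lie within distance $\beta$ of a dense box along an $\MST$-edge. More usefully, I would use the Voronoi assignment directly: a point $p$ is assigned to $B$ only if $c_B$ is the nearest dense-box center to $p$. So I would first argue that $S_B$ is contained in the (closed) Voronoi cell $V_B$ of $c_B$ with respect to the set of all dense-box centers. Since Voronoi cells of distinct sites have disjoint interiors and are convex, the convex hull of $S_B$ is contained in the convex set $V_B$, and likewise $\mathrm{conv}(S_{B'})\subseteq V_{B'}$; two distinct Voronoi cells meet at most along a shared boundary segment.

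The remaining gap is that $V_B$ and $V_{B'}$ may share a boundary edge, so mere containment in the cells does not immediately give \emph{disjoint} convex hulls — I need that no point of $S_B$ and no point of $S_{B'}$ actually lies on that shared bisector, and more strongly that the two hulls do not touch there. Here I would use the geometry of the grid and Corollary~\ref{cor_8neighbor}: if $B$ and $B'$ are not $8$-neighbors, their centers are at distance at least $2\ell$ apart (where $\ell=6k\beta$), while every point assigned to $B$ is within distance $3\ell/\sqrt2 < 2\ell$... — more carefully, within distance about $\ell/2 + \ell/\sqrt2$ of $c_B$ by the sparse-box bound, so the hulls are far from the bisector and trivially disjoint. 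If $B$ and $B'$ \emph{are} adjacent dense boxes, then as noted in the text the shared Voronoi boundary coincides with the shared grid-box boundary, a vertical or horizontal grid line. Points of $S_B$ assigned from sparse boxes lie strictly on $B$'s side of that line (a point exactly on a grid line is, by the tie-breaking rule, deemed to belong to one adjacent box, and the general-position/perturbation assumption lets us take it strictly interior), and the own points of $B$ and of $B'$ lie strictly inside their respective boxes. Hence all of $S_B$ lies strictly on one side of the separating line and all of $S_{B'}$ strictly on the other, so their convex hulls are separated by that line and are disjoint.

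The main obstacle I expect is the adjacent-dense-box case and the bookkeeping of the tie-breaking rule on grid lines: one must be sure that the Voronoi reassignment never pulls a point of $B'$ across the separating grid line into a position that the hull of $S_B$ could reach, and vice versa. This is exactly the content of the parenthetical remark already made before the lemma (that for adjacent dense boxes the Voronoi boundary is the common box boundary and each box keeps all of its own points); I would make that remark precise, note that points coming from sparse boxes between $B$ and $B'$ get split by the perpendicular bisector of $c_B,c_{B'}$, which is that same grid line, and conclude strict separation. The non-adjacent cases are then immediate from the distance bound $\mathrm{dist}(p,c_B)\le 3\ell/\sqrt2$ for $p\in S_B$ together with $\mathrm{dist}(c_B,c_{B'})\ge 2\ell$, which forces the two hulls into disjoint half-planes across the bisector.
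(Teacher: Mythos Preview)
Your core approach --- contain $S_B$ in the (convex) Voronoi cell $V_B$ of $c_B$, hence $\mathrm{conv}(S_B)\subseteq V_B$, and then use that distinct Voronoi cells are disjoint --- is exactly what the paper does, and its entire proof is those two sentences. The paper does not carry out your adjacent/non-adjacent case split at all; it simply declares the Voronoi cells disjoint and stops.

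Your extra case analysis is over-engineered and, in the non-adjacent case, contains an error you should drop rather than repair. You write that $3\ell/\sqrt2 < 2\ell$, but $3/\sqrt2\approx 2.12$, so the disk-radius bound does not by itself keep $S_B$ on the near side of the bisector. Fortunately you never needed a metric bound: the Voronoi assignment rule \emph{is} the statement that every $p\in S_B$ satisfies $|p-c_B|\le |p-c_{B'}|$, i.e.\ $p$ lies on the $c_B$ side of the perpendicular bisector of $c_Bc_{B'}$, for \emph{every} other dense center $c_{B'}$, adjacent or not. That single observation, plus convexity of half-planes, already gives $\mathrm{conv}(S_B)$ and $\mathrm{conv}(S_{B'})$ in opposite closed half-planes; the tie-breaking convention (each point is assigned to exactly one box/cell) upgrades this to strict separation. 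So your boundary worries, the grid-line argument for adjacent boxes, and the failed distance estimate for non-adjacent boxes can all be replaced by the one-line Voronoi containment the paper uses.
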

\begin{proof}
We observe that the convex hull of $S_B$ is contained in the Voronoi cell of $c_B$.
Hence, since the Voronoi cells of different dense boxes are disjoint, the convex hulls of the points assigned to them are also disjoint.
\end{proof}

For each dense box $B$, we apply Theorem~\ref{theo_layers} on the points inside the dense box to compute $k$ disjoint layers of $S_B$. Next, we connect all sparse points in $S_B$ to the representative of the sector that contains them in each layer. Since all points in the same sector connect to the same representative and the sectors of the same layer do not overlap, we obtain a plane graph for each layer within the convex hull of each $S_B$. 

Hence, we obtain $k$ pairwise disjoint layers such that in each layer the points associated to each dense box are connected. Moreover, since the created edges stay within the convex hull of each subproblem and by Lemma~\ref{lem_convexdisjoint} those hulls are disjoint, each layer is plane.
Thus, to assure that each layer is connected, we must connect the construction between dense boxes.

We connect adjacent dense boxes in a tree-like manner using the following rules:

\begin{itemize}
  \item Connect every dense box to any dense box below it. 
  \item Always connect every dense box to any dense box to the left of it. 
  \item If neither the box below nor the one to the left of it is dense, connect the box to the dense box diagonally below and to the left of it. 
  \item If neither the box above nor the one to the left of it is dense, connect the box to the dense box diagonally above and to the left of it. 
\end{itemize}

To connect two dense boxes, we find and connect two representatives~$p$ and~$q$ (one from each dense box) such that~$p$ lies in the sector of~$q$ and~$q$ lies in the sector of~$p$; see \fig{fig:connecting_joined}~(a).

\begin{figure}[ht]
  \centering
  \includegraphics{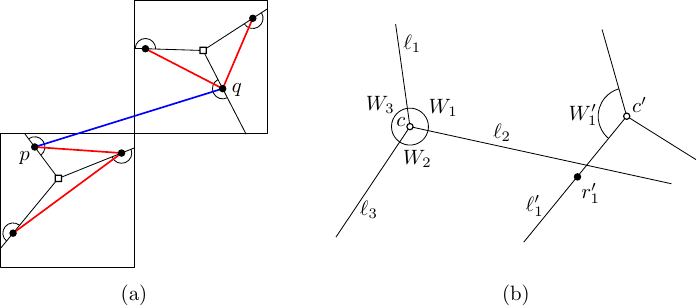}
  \caption{Connecting two dense boxes by means of $p$ and $q$. The half-circles in (a) indicate which sector each representative covers. The red edges connect the dense boxes internally and the blue edge connects the two dense cells.
  (b) illustrates the sectors involved in connecting two neighboring dense boxes.}
  \label{fig:connecting_joined}
\end{figure}

\begin{lemma}\label{lem:connect_boxes}
For any layer and any two adjacent dense boxes $B$ and $B'$, there are two representatives $p$ and $q$ in $B$ and $B'$, respectively, such that $p$ lies in the sector of $q$ and $q$ lies in the sector of~$p$.
\end{lemma}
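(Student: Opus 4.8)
The plan is to analyze the geometry of the two adjacent dense boxes $B$ and $B'$ and to use the structure of the construction from Theorem~\ref{theo_layers} together with Lemma~\ref{lem_convexdisjoint}. First I would fix a layer and recall that in $B$ the construction partitions $S_B$ into three convex angular sectors around a center point $c_B'$ of $S_B$ (not the box center $c_B$), each with a designated representative, and similarly for $B'$. The key quantitative facts I want are: (i) by Corollary~\ref{cor_8neighbor} and the grid size $6k\beta$, the boxes $B$ and $B'$, and hence the convex hulls of $S_B$ and $S_{B'}$, lie in a bounded region; (ii) by Lemma~\ref{lem_convexdisjoint} these convex hulls are disjoint, so there is a separating line, and in fact (since the hulls sit inside their respective Voronoi cells which meet along the shared box edge) they are separated by the perpendicular bisector of $c_Bc_{B'}$, i.e.\ essentially by the grid line between the two boxes.

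Next I would argue about which sector of $B$ ``faces'' $B'$. Consider the direction from the center point of $S_B$ towards (say) the center of mass or any point of $S_{B'}$; this direction falls into one of the three convex sectors of $B$, call it $\sigma$ with representative $p$. Because the sectors are convex and unbounded (they are cones from the center point), the entire convex hull of $S_{B'}$ — which lies on the far side of the separating line — is contained in the cone $\sigma$: any point of $S_{B'}$, seen from the center point of $S_B$, lies in the same halfplane structure that defines $\sigma$, since crossing out of $\sigma$ would force a point of $S_{B'}$ onto the wrong side of one of the two bounding rays, which can be ruled out using the separation from step (i)–(ii) and the fact that the two bounding rays of $\sigma$ pass through points of $S_B$ inside $B$. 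In particular the representative $q'$ of the sector of $B'$ containing the center point of $S_B$ — obtained symmetrically — lies in $\sigma$, so $q'$ lies in the sector of $p$. Running the symmetric argument with the roles of $B$ and $B'$ swapped gives a representative $q$ of $B'$ and a sector of $B$ whose representative $p'$ satisfies: $p'$ lies in the sector of $q$. The remaining task is to reconcile these: I would show that the sector $\sigma$ of $B$ pointing toward $B'$ and the sector $\tau$ of $B'$ pointing toward $B$ are ``mutually facing,'' so that taking $p$ = representative of $\sigma$ and $q$ = representative of $\tau$ works for both conditions simultaneously. This follows because ``the direction from center of $S_B$ to any point of $S_{B'}$'' and ``the direction from center of $S_{B'}$ to any point of $S_B$'' are (up to the bounded angular spread caused by the finite extent of the hulls) antipodal, and the bound on this spread — coming again from the box size $6k\beta$ versus the inter-center distance — is small enough that the sector assignment is forced to be consistent.

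I would organize the write-up as: (1) set up coordinates so that the separating grid line is vertical and $B$ is to the left; (2) bound the angular aperture under which $B$ sees $\mathrm{conv}(S_{B'})$ and vice versa, using that each box has side $6k\beta$ and adjacent box centers are at distance $6k\beta$; (3) identify $\sigma$ and $\tau$ as the unique sectors crossed by the horizontal direction and verify containment of the opposite hull; (4) conclude that $p \in \tau$ and $q \in \sigma$. Throughout I can invoke Theorem~\ref{theo_layers} for the sector/representative structure and Lemma~\ref{lem_convexdisjoint} for disjointness of the hulls.

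The main obstacle I expect is step (2)–(3): making the angular argument fully rigorous for \emph{all} positions of the center points of $S_B$ and $S_{B'}$ within their boxes. The center point of $S_B$ can sit anywhere in $B$ (it need not be the box center $c_B$), so the cone $\sigma$ could be oriented somewhat unfavorably, and one must check that even in the worst case the hull of $S_{B'}$ still lies entirely inside a single sector and that this sector is the one ``pointing toward'' $B'$ in a way that is consistent from both sides. This requires carefully bounding the ratio between the possible displacement of the center points inside their boxes and the guaranteed separation distance between the hulls, and verifying that the $\lfloor n/3\rfloor$-balanced splitting used in Theorem~\ref{theo_layers} keeps the three sector boundaries away from the direction toward the neighboring box. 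Handling the diagonal adjacency cases from the connection rules (where $B$ and $B'$ share only a corner) separately, with the separating line now a diagonal rather than a grid line, is the natural place where the argument could get fiddly.
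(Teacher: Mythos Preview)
Your central step is the claim that the entire convex hull of $S_{B'}$ lies in a single sector $\sigma$ of $B$; unfortunately this is false, and the rest of the plan (the angular-aperture bound and the ``mutually facing'' reconciliation) collapses with it. The center point $c$ used in Theorem~\ref{theo_layers} is a center point of the \emph{point set} $S_B$, not the box center, and nothing prevents all points of the dense box from clustering near the boundary shared with $B'$. In that case $c$ sits near that boundary, the angular aperture under which $B'$ is seen from $c$ is close to $\pi$, and the three sector-bounding rays (which pass through points of $S_B$ near $c$) can perfectly well extend into $B'$ and separate its points into two different sectors. The $\lfloor n/3\rfloor$ balance constrains how the points of $S_B$ distribute among the sectors, but it says nothing about where the rays go once they leave the hull of $S_B$; so the verification you flag as the ``main obstacle'' is not merely fiddly --- it is impossible in general.

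The paper's proof does not attempt any such global containment. It fixes the sector $W_1$ of $B$ containing the center point $c'$ of $S_{B'}$ and the sector $W_1'$ of $B'$ containing $c$, and simply checks whether their representatives $r_1,r_1'$ already work. If $r_1'\notin W_1$, then the segment $c'r_1'$ (which lies entirely in $B'$) must cross a bounding ray $\ell_2$ of $W_1$; one then looks at the representative $r_2$ on $\ell_2$ and argues by a short crossing argument: either $r_2$ lies before the crossing point (and then $r_2,r_1'$ form the desired pair), or the segments $cr_2\subset B$ and $c'r_1'\subset B'$ cross, contradicting that $B$ and $B'$ are disjoint convex boxes. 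The whole proof uses only that $c,r_i\in B$ and $c',r_i'\in B'$ and the disjointness of the boxes --- no aperture bounds and no claim that a sector swallows an entire neighboring hull. If you want to repair your write-up, this local crossing argument is the missing idea.
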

\begin{proof}
Consider two boxes $B$ and $B'$ with center points (of their respective point sets)~$c$ and~$c'$.
Now let $W_1$ and $W_1'$ with representatives $r_1$ and $r_1'$ denote the sectors containing $c'$ and $c$, respectively;
see \fig{fig:connecting_joined}.
The other sectors $W_2$ and $W_3$ of $B$ with representatives $r_2$ and $r_3$ are ordered clockwise. We use $\ell_i$ to denote the ray from $c$ containing $r_i$. If $r_1\in W_1'$ and $r_1' \in W_1$ we are done. So assume that $r_1' \not \in W_1$, the case when $r_1 \not\in W'_1$ (or when both $r_1 \not\in W'_1$ and $r_1' \not \in W_1$) is symmetric. It follows that $r_1'$ is in sector $W_2$ if the line segment $c'r_1'$ intersects $\ell_2$ or sector $W_3$ if the segment intersects $\ell_2$ and $\ell_3$. Assume that $r_1'$ is in sector $W_2$ (again the argument is symmetric when $r_1'$ is in sector $W_3$). Now $r_2$ can be positioned on $\ell_2$ between $c$ and the intersection point with $c'r_1'$ or behind this intersection point when viewed from $c$. In the former case $r_1'$ is in $W_2$ and $r_2$ is in $W_1'$ and we are done. In the latter case the segments $cr_2$ and $c'r_1'$ cross. Since $c, r_2 \in B$ and $c', r_1' \in B'$ this crossing would imply that $B$ and $B'$ are not disjoint, a contradiction.
\end{proof}

Now that we have completed the description of the construction, we show that each layer of the resulting graph is plane and connected, and that the length of the longest edge is bounded. 

\begin{lemma}
  Each layer is plane. 
\end{lemma}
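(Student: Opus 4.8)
The plan is to argue that crossings can only occur in two places—inside a single dense box's internal construction, or along the inter-box connecting edges—and to rule out both. First I would recall that within each dense box $B$ the $k$ layers are plane by Theorem~\ref{theo_layers}, and the sparse points assigned to $B$ are attached to sector representatives in a way that keeps each layer plane inside the convex hull of $S_B$ (the sectors of one layer are convex and pairwise interior-disjoint, so a point and its representative see each other without obstruction). By Lemma~\ref{lem_convexdisjoint} these convex hulls are pairwise disjoint, so no internal edge of one box can cross an internal edge of another box, and no internal edge can cross a connecting edge unless the connecting edge enters the hull of a box other than its two endpoints' boxes.

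Next I would handle the connecting edges. Each connecting edge joins a representative $p\in S_B$ to a representative $q\in S_{B'}$ for two adjacent dense boxes $B,B'$ chosen as in Lemma~\ref{lem:connect_boxes}, and it lies in the strip between the two (adjacent) grid cells, hence between the two Voronoi cells $V(c_B)$ and $V(c_{B'})$; in particular it stays outside the convex hull of $S_{B''}$ for every other dense box $B''$, since that hull sits inside $V(c_{B''})$. So a connecting edge cannot cross any internal edge. It remains to show two connecting edges do not cross each other. Here I would use that the connecting edges form a tree on the dense boxes (by the four connection rules, which pick at most one ``down/left/diagonal'' partner per box), that the 8-neighbor topology of Corollary~\ref{cor_8neighbor} makes this well-defined, and that each connecting edge is confined to the union of its two cells; two such edges belonging to different pairs of cells can only potentially meet at a shared cell, and within a shared cell the relevant segments $p q$ and $p q'$ emanate from representatives lying in disjoint boxes on opposite sides, so the disjointness-of-boxes argument from Lemma~\ref{lem:connect_boxes} again forbids a crossing.

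The main obstacle I expect is the bookkeeping for two connecting edges that are incident to the \emph{same} dense box: a box $B$ may simultaneously connect downward, leftward, and along a diagonal, so three connecting edges can leave $B$, and one must check that inside $B$'s cell (or inside the convex hull of $S_B$) these segments do not cross each other and do not cross $B$'s internal layer edges. I would resolve this by noting that each such connecting edge ends at a representative of $B$, and the representatives together with the chosen matching endpoints in the neighbouring cells are separated by the cell boundaries of $B$; combined with the sector structure (each connecting edge, viewed from $c_B$, lies in the sector of the representative it uses, and those sectors are interior-disjoint), the segments leaving $B$ are pairwise non-crossing, and they do not re-enter $B$'s interior beyond a neighbourhood of their endpoint, so they do not disturb the internal planarity established via Theorem~\ref{theo_layers}. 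Assembling these observations gives that every layer is plane.
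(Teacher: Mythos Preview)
Your outline follows the paper's approach—internal planarity from the sector structure and Lemma~\ref{lem_convexdisjoint}, then confinement of each connecting edge to the Voronoi cells of its two boxes, then a sector argument at a shared box—but two steps are not yet justified. First, after showing the connecting edge $pq$ avoids $V(c_{B''})$ for $B''\neq B,B'$ you conclude ``so a connecting edge cannot cross any internal edge''; this ignores that $pq$ does enter the convex hulls of $S_B$ and $S_{B'}$. The paper fills this gap with exactly the sector argument you only invoke later: by the choice in Lemma~\ref{lem:connect_boxes}, $q$ lies in $p$'s sector, that sector is convex, hence the whole segment $pq$ lies in $p$'s sector, and every internal edge of $B$ in that sector is incident to $p$ and therefore cannot cross $pq$. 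Your phrase ``do not re-enter $B$'s interior beyond a neighbourhood of their endpoint'' is not a substitute for this. Second, for \emph{diagonally} adjacent $B,B'$ the edge need not stay in the two grid cells; it may pass through the two orthogonally adjacent cells. The paper's point is that the rules connect diagonally only when those neighbours are not dense, so the two corner triangles belong to $V(c_B)\cup V(c_{B'})$, and the Voronoi-confinement still holds. Your ``strip between the two cells'' skips this case.

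Two smaller items: for two connecting edges at the same box, split explicitly into same-sector (same representative, shared endpoint) versus different-sector (edges lie in interior-disjoint wedges); and note that the sectors are centred at the centerpoint of the dense points in $B$, not at the cell centre $c_B$. The appeal to a tree structure and to ``the disjointness-of-boxes argument from Lemma~\ref{lem:connect_boxes}'' is unnecessary and does not do the work you assign it.
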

\begin{proof}
Since dense boxes are internally plane and the addition of edges to the sparse points do not violate planarity, it suffices to show that the edges between dense boxes cannot cross any previously inserted edges and that these edges cannot intersect other edges used to connect dense boxes.
  
We first show that the edge used to connect boxes $B$ and $B'$ is contained in the union of the Voronoi cells of these two boxes.
If $B$ and $B'$ are horizontally or vertically adjacent, the connecting edge stays in the union of the two dense boxes, which is contained in their Voronoi cells.
If $B$ and $B'$ are diagonally adjacent, we connect them only if their shared horizontal and vertical neighbors are not dense.
This implies that at least the two triangles defined by the sides of $B$ and $B'$ that are adjacent to their contact point are part of the union of the Voronoi cells of these boxes.
Hence, the edge used to connect $B$ and $B'$ cannot intersect the Voronoi cell of any other dense box.
Since all points of a dense box in a sector connect to the same representative and these edges lie entirely inside the sector, the edge connecting two adjacent boxes can intersect only at one of the two representatives, but does not cross them.
Therefore, an edge connecting two adjacent dense boxes by connecting the corresponding representatives cannot cross any previously inserted edge.
  
Next, we show that edges connecting two pairs of dense boxes cannot cross.
Since any edge connecting two dense boxes stays within the union of the Voronoi cells of $B$ and $B'$, the only way for two edges to intersect is if they connect to the same box $B$ and intersect in the Voronoi cell of $B$.
If the connecting edges lie in the same sector of $B$, they connect to the same representative and thus they cannot cross.
If they lie in different sectors of $B$, the edges lie entirely inside their respective sectors.
Since these sectors are disjoint, this implies that the edges cannot intersect. 
\end{proof}

\begin{lemma}
  Each layer is connected. 
\end{lemma}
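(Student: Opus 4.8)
The plan is to establish connectivity of a fixed layer in two stages: first that the points $S_B$ assigned to a single dense box $B$ induce a connected subgraph of the layer, and then that the dense boxes are linked together into one connected structure by the inter-box edges. For the first stage, recall that inside $B$ the construction applies Theorem~\ref{theo_layers} to the (at least $3k$) points lying in $B$, obtaining a spanning tree on them, and then joins every sparse point assigned to $B$ to the representative of the sector that contains it. Since every representative is one of those interior points of $B$, it is already connected to the rest of $S_B$ by that spanning tree, so the layer restricted to $S_B$ is connected. Because every point of $S$ lies in the Voronoi cell of exactly one dense-box centre (and at least one dense box exists by an earlier lemma), we have $S=\bigcup_B S_B$, and it only remains to show that the sets $S_B$ are mutually connected inside the layer.

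For the second stage I would work with the auxiliary graph $D$ whose vertices are the dense boxes and whose edges are the pairs $\{B,B'\}$ joined by one of the four connection rules; by Lemma~\ref{lem:connect_boxes} each such pair is actually realised by an edge of the layer between two representatives, so it suffices to prove that $D$ is connected. By Corollary~\ref{cor_8neighbor} the $8$-neighbour adjacency graph $D_8$ of the dense boxes is connected, so it is enough to check that the two endpoints of every edge of $D_8$ lie in a common component of $D$. This I would verify by a short case analysis on the relative position of two $8$-adjacent dense boxes $B$ and $B'$. If they are orthogonally adjacent, $\{B,B'\}$ is already an edge of $D$: by the first rule if one lies directly below the other, by the second rule if one lies directly to the left of the other. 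If they are diagonally adjacent, say $B$ at grid position $(i,j)$ and $B'$ at $(i-1,j-1)$, look at the two boxes $X=(i-1,j)$ and $Y=(i,j-1)$ lying orthogonally between them: if $X$ is dense, the edges $\{B,X\}$ (second rule) and $\{X,B'\}$ (first rule) join $B$ to $B'$ in $D$; if $Y$ is dense, the edges $\{B,Y\}$ (first rule) and $\{Y,B'\}$ (second rule) do; and if neither $X$ nor $Y$ is dense, then the box below $B$ and the box to the left of $B$ are both non-dense, so the third rule connects $B$ directly to $B'$. The opposite diagonal orientation, $B'$ at $(i-1,j+1)$, is handled identically, with the fourth rule taking over the role of the third (and some of the two-edge paths above issued from the other box of the pair). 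Hence every edge of $D_8$ is spanned inside $D$, so $D$ — and therefore the layer — is connected.

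I expect the only genuine difficulty to be this diagonal case: one must be certain that the four connection rules, which were designed to prevent crossings rather than to guarantee connectivity, never leave a diagonal adjacency uncovered — that is, whenever two diagonally adjacent boxes are both dense and the two orthogonally-between boxes are not, the applicable diagonal rule indeed fires on the correct box. Keeping straight which of the two boxes issues each rule is the delicate bookkeeping; everything else is routine.
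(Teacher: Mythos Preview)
Your proposal is correct and follows essentially the same two-stage structure as the paper's proof: first observe that each $S_B$ is internally connected, then use Corollary~\ref{cor_8neighbor} to reduce the global connectivity to showing that every $8$-adjacency between dense boxes is realised (possibly via an intermediate box) by the four connection rules. Your explicit case analysis for the diagonal adjacencies is more detailed than the paper's one-sentence claim, but the underlying argument is identical.
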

\begin{proof}
  Since the sectors of the representatives of the dense boxes cover the plane, each point in a sparse box is connected to a representative of the dense box it is assigned to. Hence, showing that the dense boxes are connected completes the proof. 
  
By Corollary~\ref{cor_8neighbor}, the dense boxes are connected using the 8-neighbor topology.
This implies that there is a path between any pair of dense boxes where every step is one to a horizontally, vertically, or diagonally adjacent box.
Since we always connect horizontally or vertically adjacent boxes and we connect diagonally adjacent boxes when they share no horizontal and vertical dense neighbor, the layer is connected after adding edges as described in the proof of Lemma~\ref{lem:connect_boxes}.
\end{proof}

\begin{lemma}
  The distance between a representative in a dense box $B$ and any point connecting to it is at most $12 \sqrt{2} k \beta$.
\end{lemma}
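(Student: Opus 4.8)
The plan is to bound the relevant distance by a triangle-inequality argument that splits the path from a representative to a point it serves into two pieces: the distance within the construction of a single dense box, and the distance across the grid induced by the assignment of sparse points via Voronoi cells. First I would recall that each dense box has side length $\ell = 6k\beta$, so its diameter is $\ell\sqrt{2} = 6\sqrt{2}k\beta$. A point $p$ connecting to a representative $q$ in dense box $B$ is, by construction, a point of $S_B$, i.e.\ a point in the Voronoi cell of the center $c_B$ of $B$. By Lemma~\ref{lem_sparsenotfar} every sparse box is adjacent to a dense box, and the earlier remark (just before Lemma~\ref{lem_convexdisjoint}) already observes that such a point $p$ is within distance $3\ell/\sqrt{2}$ of its nearest dense-box center; combined with the fact that $q$ lies inside $B$, whose center is $c_B$ and whose half-diagonal is $\ell/\sqrt{2} = 3\sqrt{2}k\beta$, we get
\[
|pq| \le |pc_B| + |c_Bq| \le \frac{3\ell}{\sqrt{2}} + \frac{\ell}{\sqrt{2}} = \frac{4\ell}{\sqrt{2}} = \frac{4\cdot 6k\beta}{\sqrt{2}} = 12\sqrt{2}k\beta,
\]
which is exactly the claimed bound.

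The key steps, in order, are: (i) record $\ell = 6k\beta$ and the half-diagonal $\ell/\sqrt 2$ of a grid cell; (ii) note that $q\in B$ forces $|qc_B|\le \ell/\sqrt 2$; (iii) invoke the assignment rule — $p\in S_B$ means $p$ lies in the Voronoi cell of $c_B$, and $p$ lies either in $B$ itself or in a sparse box adjacent (via Lemma~\ref{lem_sparsenotfar}, using the neighbor-of-neighbor observation) so that $|pc_B|\le 3\ell/\sqrt 2$; (iv) apply the triangle inequality $|pq|\le|pc_B|+|c_Bq|$ and simplify. A small subtlety to address in step (iii) is making the $3\ell/\sqrt{2}$ estimate clean: a sparse box sharing only a corner with $B$ has its farthest point at distance at most $\ell\sqrt 2$ from that corner and the corner is at distance $\ell/\sqrt 2$ from $c_B$, giving $\ell\sqrt 2 + \ell/\sqrt 2 = 3\ell/\sqrt 2$; and one must check the Voronoi assignment never sends a point to a dense box farther than the corner-adjacent ones, which follows because the nearest dense-box center is at most $3\ell/\sqrt 2$ away (Lemma~\ref{lem_sparsenotfar} guarantees a dense box touching the sparse box, whose center is within that range).

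I expect the only real obstacle is pinning down which dense boxes a point of a sparse box can be assigned to — i.e.\ justifying the $3\ell/\sqrt{2}$ bound rigorously rather than just plausibly. This is exactly the content already sketched before Lemma~\ref{lem_convexdisjoint} ("its distance to its nearest center is at most $3\ell/\sqrt{2}$" and "only the centers of cells of neighbors and neighbors of neighbors need to be considered"), so the proof can cite that and the arithmetic collapses immediately. Everything else is a one-line triangle inequality, so the lemma is essentially bookkeeping on the grid geometry.
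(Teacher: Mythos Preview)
Your triangle-inequality argument through $c_B$ is sound for points of $S_B$, and it lands on exactly the right bound. However, you overlook one class of connections: the edges added \emph{between} adjacent dense boxes (Lemma~\ref{lem:connect_boxes}), where a representative $q\in B$ is joined to a representative $p$ in an adjacent dense box $B'$. Such a $p$ lies in $S_{B'}$, not $S_B$, so your premise ``$p$ is, by construction, a point of $S_B$'' fails for this case and the Voronoi-based $3\ell/\sqrt{2}$ estimate does not literally apply. The fix is immediate: since $p\in B'$ and $B'$ is one of the eight neighbors of $B$, the far corner of $B'$ is at distance at most $3\ell/\sqrt{2}$ from $c_B$, so $|pc_B|\le 3\ell/\sqrt{2}$ holds here too and your triangle inequality goes through verbatim. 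You should state this case explicitly.

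For comparison, the paper argues more directly that any point connected to a representative in $B$ lies in $B$ or one of its eight neighboring cells, hence inside some $2\times 2$ block of grid cells containing $B$; the diagonal of that block is $2\sqrt{2}\cdot 6k\beta = 12\sqrt{2}k\beta$. Your route via $c_B$ is equivalent in spirit and arguably handles the Voronoi-assignment case more carefully (a sparse point assigned to $B$ need not sit in a box adjacent to $B$, but its distance to $c_B$ is still at most $3\ell/\sqrt{2}$ because $c_B$ is its nearest dense center). Once you add the inter-box edge case, your proof is complete.
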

\begin{proof}
  Since the representatives of $B$ are connected only to points from dense and sparse boxes adjacent $B$, the distance between a representative and a point connected to it is at most the length of the diagonal of the $2 \times 2$ grid cells with $B$ as one of its boxes. Since a box has width $6 k \beta$, this diagonal has length $2 \sqrt{2} \cdot 6 k \beta = 12 \sqrt{2} k \beta$.
\end{proof}

\begin{theorem}\label{thm:main2}
  For all finite point sets with at least $4 (3k - 1) + 1$ points, we can extract $k$ plane layers with the longest edge having length at most $12 \sqrt{2} k \BE(\MST(S))$.
\end{theorem}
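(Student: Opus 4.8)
The plan is to assemble the pieces already established in this section into a single argument. First I would invoke the structural lemmas: by the lemma guaranteeing a dense box exists (which requires at least $4(3k-1)+1$ points together with $\beta \geq \BE(\MST(S))$), and by Lemma~\ref{lem_sparsenotfar}, every point of $S$ lies in or adjacent to a dense box, so the assignment of points to dense boxes via the Voronoi diagram of their centers is well-defined and can be carried out locally as described. For each dense box $B$, the assigned set $S_B$ has at least $3k$ points (it contains at least the points of $B$ itself), so Theorem~\ref{theo_layers} applies and yields $k$ disjoint plane layers on $S_B$; attaching the sparse points of $S_B$ to the sector representatives keeps each layer plane within the convex hull of $S_B$, and by Lemma~\ref{lem_convexdisjoint} these hulls are pairwise disjoint.

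Next I would handle the inter-box connections. Using Corollary~\ref{cor_8neighbor}, the dense boxes form a connected graph under the 8-neighbor topology, and the tree-like connection rules (connect below, connect left, and the two diagonal fallbacks) select a connected spanning subgraph of this adjacency graph; Lemma~\ref{lem:connect_boxes} provides, for each such adjacency and each layer, the pair of representatives $p,q$ with $p$ in the sector of $q$ and vice versa, so the connecting edge can be added. The planarity lemma above shows these connecting edges stay inside the union of the two relevant Voronoi cells and cross neither previously inserted edges nor each other, and the connectivity lemma shows the resulting layer is connected; together with the per-box planarity this gives that each of the $k$ layers is a plane connected spanning graph. Since the $k$ layers within each box are disjoint and the connecting edges added in different layers are distinct (each uses its own layer's representatives), the $k$ global layers remain pairwise edge-disjoint.

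Finally I would bound the edge length. Every edge is either an internal edge of some dense box (connecting a point of a $2\times 2$ block of cells to a representative in its center box) or an inter-box edge between representatives of adjacent boxes; in both cases the endpoints lie within a $2\times 2$ arrangement of grid cells, so by the last lemma the length is at most the diagonal $12\sqrt{2}\,k\beta$ of that arrangement. Taking $\beta = \BE(\MST(S))$ (the smallest admissible value) yields the claimed bound $12\sqrt{2}\,k\,\BE(\MST(S))$.

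I do not expect any single step to be a serious obstacle, since all the work has been front-loaded into the preceding lemmas; the only point requiring a little care is checking that the connection rules actually produce a \emph{connected} subgraph of the 8-neighbor adjacency graph in every configuration of dense boxes (in particular that the diagonal fallbacks never create a conflict or leave a component isolated), but this is exactly what the connectivity lemma asserts. The theorem then follows by simply collecting these facts.
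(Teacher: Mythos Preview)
Your proposal is correct and matches the paper's approach: the theorem is stated there without an explicit proof, as an immediate corollary of the preceding lemmas, and you have assembled exactly those pieces in the intended order. One minor imprecision: in the paper Theorem~\ref{theo_layers} is applied to the points \emph{inside} the dense box (to determine the center point and the three sectors), and only then are the remaining points of $S_B$---the sparse ones---attached to the sector representatives; your phrasing momentarily applies the theorem to all of $S_B$ and then attaches sparse points again, but this does not affect the argument.
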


\section{Conclusions}
We presented two algorithms for constructing $k$ edge-disjoint non-crossing plane spanning graphs on a given point set such that the length of the bottleneck edge is minimized. The first algorithm uses global properties in order to keep all edges as small as possible. We also give matching worst-case lower bounds, making the algorithm tight. The main drawback is that this method can only be used to construct two layers, and it is unlikely that a similar approach can work for more.  Our second algorithm works for a large number of layers (up to $n/12$ layers). It uses only local information, thus it can be executed in a distributed manner. The drawback of this approach is that the length of the bottleneck edge grows considerably: for two layers, the $24 \sqrt{2} \BE(\MST(S))$ implied by this method is far larger than the $3 \BE(\MST(S))$ of the first approach.

%we get cannot guarantee that the constructed layers are trees (only that they are plane spanning graphs). Furthermore, the length of the longest edge in the graphs is quite large, namely $12 \sqrt{2} k \BE(\MST(S))$. . It is also worth noting that the centralized approach can be applied with as few as 4 points, while the distributed approach needs 21 points. 

So far, there is no centralized method to construct
more than two trees. Finding such a method and comparing
it to the distributed method presented here is
an interesting direction of future research. Another direction would
be to lower the length of the longest edge in the distributed
construction, though from a purely worst-case theoretical point of
view this is likely to require a different approach from the one used
in this paper.

\subsection*{Acknowledgments} This research was initiated during the 10th European Research Week on Geometric Graphs (GGWeek 2013), Illgau, Switzerland. We would like to thank all participants for fruitful discussions.
O.A., A.P., and B.V.\ were partially supported by the ESF EUROCORES programme EuroGIGA - ComPoSe, Austrian Science Fund (FWF): I~648-N18. T.H.\ was supported by the Austrian Science Fund (FWF): P23629-N18 `Combinatorial Problems on Geometric Graphs'. M.K.\ was supported in part by the ELC project (MEXT KAKENHI No.17K12635) and NSF award CCF-1423615..
A.P.\ is supported by an Erwin Schr\"odinger fellowship, Austrian Science Fund (FWF): J-3847-N35.
A.v.R.\ and M.R.\ were supported by JST ERATO Grant Number JPMJER1201, Japan. 
\bibliographystyle{splncs03} %
\bibliography{packing}

\end{document}

%% EOF